\documentclass{article}
\usepackage{microtype}
\usepackage{graphicx}
\usepackage{subcaption}
\usepackage{booktabs}
\usepackage{hyperref}

\usepackage{lmodern}
\usepackage{amssymb,amsmath,amsthm,amsfonts}
\usepackage{textgreek}
\usepackage{mathtools}
\usepackage{enumitem}
\usepackage[numbers,comma,sort&compress]{natbib}
\usepackage{authblk}
\usepackage{graphicx, caption, subcaption}
\usepackage{svg}
\usepackage{float}
\usepackage{qcircuit}
\usepackage{physics}
\usepackage{footnote}
\usepackage{xcolor}
\usepackage{mathrsfs}
\usepackage{bbm}
\usepackage{bm}
\usepackage{hhline}
\usepackage{cases}
\usepackage[capitalize,noabbrev]{cleveref}

%%%%%%%%%%%%%%%%%%%%%%%%%%%%%%%%
% THEOREMS
%%%%%%%%%%%%%%%%%%%%%%%%%%%%%%%%
\theoremstyle{plain}
\newtheorem{theorem}{Theorem}

\newtheorem{lemma}[theorem]{Lemma}

\theoremstyle{definition}

\newtheorem{remark}{Remark}

\newcommand{\eqn}[1]{\hyperref[eqn:#1]{(\ref*{eqn:#1})}}
\newcommand{\rem}[1]{\hyperref[rem:#1]{Remark~\ref*{rem:#1}}}
\newcommand{\thm}[1]{\hyperref[thm:#1]{Theorem~\ref*{thm:#1}}}
\newcommand{\cor}[1]{\hyperref[cor:#1]{Corollary~\ref*{cor:#1}}}
\newcommand{\defn}[1]{\hyperref[defn:#1]{Definition~\ref*{defn:#1}}}
\newcommand{\lem}[1]{\hyperref[lem:#1]{Lemma~\ref*{lem:#1}}}
\newcommand{\prop}[1]{\hyperref[prop:#1]{Proposition~\ref*{prop:#1}}}
\newcommand{\fig}[1]{\hyperref[fig:#1]{Figure~\ref*{fig:#1}}}
\newcommand{\tab}[1]{\hyperref[tab:#1]{Table~\ref*{tab:#1}}}
\newcommand{\algo}[1]{\hyperref[algo:#1]{Algorithm~\ref*{algo:#1}}}
\renewcommand{\sec}[1]{\hyperref[sec:#1]{Section~\ref*{sec:#1}}}
\newcommand{\append}[1]{\hyperref[append:#1]{Appendix~\ref*{append:#1}}}
\newcommand{\fac}[1]{\hyperref[fac:#1]{Fact~\ref*{fac:#1}}}
\newcommand{\lin}[1]{\hyperref[lin:#1]{Line~\ref*{lin:#1}}}
\newcommand{\fnote}[1]{\hyperref[fnote:#1]{Footnote~\ref*{fnote:#1}}}

\newcommand{\prob}[1]{\hyperref[prob:#1]{Problem~\ref*{prob:#1}}}

\def\>{\rangle}
\def\<{\langle}
\def\trans{^{\top}}

\newcommand{\R}{\mathbb{R}}
\newcommand{\C}{\mathbb{C}}

\renewcommand{\d}{\mathrm{d}}

\newcommand{\xx}{\mathbf{x}}
\newcommand{\vv}{\mathbf{v}}

\DeclareMathOperator{\poly}{poly}

% \SetKwInput{KwData}{Input}
% \SetKwInput{KwResult}{Output}

\usepackage[accepted]{icml2025}

\icmltitlerunning{Quantum Optimization via Gradient-Based Hamiltonian Descent}

\begin{document}

\twocolumn[
\icmltitle{Quantum Optimization via Gradient-Based Hamiltonian Descent}

\begin{icmlauthorlist}
\icmlauthor{Jiaqi Leng}{1,2}
\icmlauthor{Bin Shi}{3,4}
\end{icmlauthorlist}

\icmlaffiliation{1}{Simons Institute for the Theory of Computing, University of California, Berkeley, USA}
\icmlaffiliation{2}{Department of Mathematics, University of California, Berkeley, USA}
\icmlaffiliation{3}{Center for Mathematics and Interdisciplinary Sciences, Fudan University, Shanghai, China}
\icmlaffiliation{4}{Shanghai Institute for Mathematics and Interdisciplinary Sciences, Shanghai, China}

\icmlcorrespondingauthor{Jiaqi Leng}{jiaqil@berkeley.edu}

\icmlkeywords{Continuous Optimization, Quantum Hamiltonian Descent, Gradient-based Methods}

\vskip 0.3in
]

\printAffiliationsAndNotice  

\begin{abstract}
With rapid advancements in machine learning, first-order algorithms have emerged as the backbone of modern optimization techniques, owing to their computational efficiency and low memory requirements. Recently, the connection between accelerated gradient methods and damped heavy-ball motion, particularly within the framework of Hamiltonian dynamics, has inspired the development of innovative quantum algorithms for continuous optimization. One such algorithm, Quantum Hamiltonian Descent (QHD), leverages quantum tunneling to escape saddle points and local minima, facilitating the discovery of global solutions in complex optimization landscapes. However, QHD faces several challenges, including slower convergence rates compared to classical gradient methods and limited robustness in highly non-convex problems due to the non-local nature of quantum states. Furthermore, the original QHD formulation primarily relies on function value information, which limits its effectiveness. Inspired by insights from high-resolution differential equations that have elucidated the acceleration mechanisms in classical methods, we propose an enhancement to QHD by incorporating gradient information, leading to what we call gradient-based QHD.
Gradient-based QHD achieves faster convergence and significantly increases the likelihood of identifying global solutions. Numerical simulations on challenging problem instances demonstrate that gradient-based QHD outperforms existing quantum and classical methods by at least an order of magnitude.
\end{abstract}

\section{Introduction}
In modern machine learning, a central challenge lies in unconstrained optimization, particularly the task of minimizing a continuous objective function without any constraints. Mathematically, this problem is formulated as:
\[
\min_{x \in \R^d} f(x).
\]
Efficiently solving such optimization problems is fundamental to a wide range of machine learning applications.  First-order optimization algorithms have emerged as the
cornerstone of this endeavor due to their computational efficiency and low memory requirements. One of the simplest yet most widely used first-order methods is the vanilla gradient descent, which updates iteratively according to:
\[
x_{k+1} = x_{k} - s\nabla f(x_k),
\]
where $s>0$ denotes the step size. This method, though simple, serves as the foundation for many modern optimization techniques. In the early 1980s, a groundbreaking advancement was introduced by~\citet{nesterov1983method}: the accelerated gradient method, now widely known as Nesterov’s accelerated gradient descent method (NAG). This method revolutionized first-order optimization by achieving a faster convergence rate compared to vanilla gradient descent. The iterative update rules for NAG are as follows:
\begin{align*}
    x_k &= y_{k-1} - s\nabla f(y_{k-1}),\\
    y_k &= x_k + \frac{k-1}{k+2} (x_k - x_{k-1}),
\end{align*}
where $s>0$ is the step size. The key innovation of NAG lies in the introduction of momentum, which effectively reduces oscillations in the optimization trajectory and speeds up progress towards the optimal solution.

\begin{figure*}[htb!]
\vspace{-6pt}
\begin{center}
\includegraphics[width=0.9\textwidth]{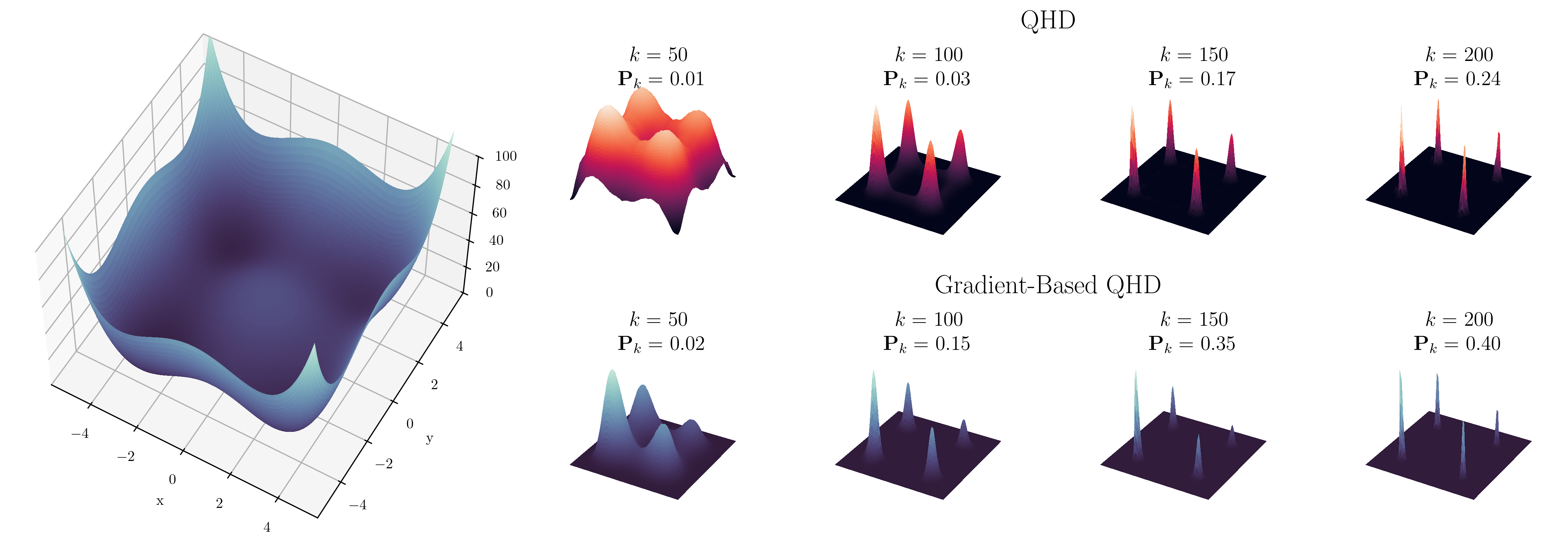}
\vspace{-12pt}
\caption{Numerical comparison of successful probability across iterations for both QHD and gradient-based QHD applied to the Styblinski-Tang function. $\pmb{P}_k$ denotes the success probability at iteration $k$.
}
\label{fig:compare-qhd-grad}
\end{center}
\vspace{-18pt}
\end{figure*}
 
Recent advancements have shed light on the mechanisms underlying the acceleration of NAG, thereby effectively bridging the gap between its discrete updates and the continuous dynamics of damped heavy-ball motion. One pivotal contribution in this area is the introduction of the low-resolution ordinary differential equation (ODE) by~\citet{su2016differential},  which characterizes the continuous limit of NAG as:
\[
\ddot{X} + \frac{3}{t}\dot{X} + \nabla f(X) = 0,
\]
where the first derivative $\dot{X}$ represents velocity in classical mechanics. By transforming this equation into its canonical form, we obtain:
\[
\left\{ \begin{aligned}
          & \dot{X} = V, \\ 
          & \dot{V} = - \frac{3}{t}\dot{V} - \nabla f(X).
         \end{aligned} \right.
\]
This canonical form establishes the foundation for a variational perspective on the acceleration phenomenon, which is articulated through the Bregman Lagrangian,
\begin{equation}\label{eqn:bregman-lagrangian}
    \mathcal{L}(X,V,t) = \frac12 t^3 \| V\|^2 - t^3 f(X), 
\end{equation}
as introduced by~\citet{wibisono2016variational}. Furthermore, employing the Legendre transformation, we can convert this Lagrangian into its Hamiltonian form:
\begin{align}
\label{eqn:bregman-hamiltonian}
H(X, P, t) = \frac{1}{2t^3} \|P\|^2 + t^3f(X),
\end{align}
which paves the way for extending the analysis from classical dynamics to quantum dynamics.

By transforming the classical momentum variable $P$ to the quantum momentum operator $-i \nabla$ within the Hamiltonian~\eqref{eqn:bregman-hamiltonian},~\citet{leng2023quantum} have pioneered a groundbreaking algorithm, known as Quantum Hamiltonian Descent (QHD), which defines the quantum dynamics through the following Schr\"odinger equation as
\begin{equation}
\label{eqn: schrodinger}
  i \partial_t \Psi(t,x) = \hat{H}(t) \Psi(t,x),  
\end{equation}
where the time-dependent Hamiltonian is articulated as\footnote{The original formulation of QHD allows a more general Hamiltonian: $\hat{H}(t)=e^{\alpha_t - \gamma_t}(-\Delta/2)+e^{\alpha_t+\beta_t+\chi_t}f(x)$, where the Laplacian $\Delta = \nabla \cdot \nabla$, as given in Eq. (A.24) of~\cite{leng2023quantum}. For simplicity, we specialize to the parameter choices corresponding to the classical NAG, namely $\alpha_t = -\log(t)$ and $\beta_t=\gamma_t = 2\log(t)$.}:
\begin{equation}
\label{eqn:hamilton}
 \hat{H}(t) = \frac{1}{2}\left\|t^{-3/2} (-i\nabla) \right\|^2  + t^3 f(x).
\end{equation}
Let $\Psi(t,x)\colon [0,\infty)\times \R^d \to \C$ denote a quantum wave function, whose squared modulus $|\Psi(t,x)|^2$ represents the probability distribution of a hypothetical quantum particle in $\R^d$ at any time $t\ge 0$. For sufficiently large evolution time $t$, the probability distribution is expected to concentrate near the low-energy configurations of the potential $f$, particularly around its global minimum. Measuring the quantum state in the computational basis at such times yields a random vector $X \sim |\Psi(t,x)|^2$, which is likely to lie close to the global minimizer of $f$, thereby approximately solving the associated optimization problem.

As a quantum algorithm, QHD is implemented by simulating the time-dependent Hamiltonian~\eqref{eqn:hamilton}, which relies only on oracle access to the function values of $f$. Thus, QHD can be viewed as a quantum \emph{zeroth-order} method. A natural extension of QHD is to develop its higher-order variants that leverage additional information such as the gradient of $f$, and to analyze whether such extensions can enhance QHD’s efficiency on various continuous optimization problems.

Inspired by the high-resolution ODE framework introduced by~\citet{shi2022understanding}, where the Lyapunov function  is conceptualized as a form of energy or Hamiltonian involving the interplay of kinetic energy and gradient, we propose a novel time-dependent Hamiltonian as 
\begin{align}\label{eqn: gradient-hamtonian}
\hat{H}(t) = & \frac{1}{2}\left\|t^{-3/2} (-i\nabla) + \alpha t^{3/2} \nabla f\right\|^2    \nonumber \\
             & + \frac{\beta}{2} \| t^{3/2} \nabla f\|^2 + (t^3 + \gamma t^2)f(x).        
\end{align}
In this paper, we mainly investigate the Schr\"odinger equation~\eqref{eqn: schrodinger} with the gradient-based Hamtiltonian~\eqref{eqn: gradient-hamtonian}, termed as \textbf{gradient-based QHD}.

\subsection{Warm-up: gradient-based QHD v.s. QHD}
\label{subsec: warm-up}
We provide a numerical example to illustrate the differences between gradient-based QHD and standard QHD, both qualitatively and quantitatively.
\fig{compare-qhd-grad} visualizes the probability distribution across iterations for both QHD and gradient-based QHD, applied to the non-convex Styblinski-Tang function, which features three local minima alongside a global minimum. 
\begin{figure}[htb!]
\centering
\begin{subfigure}[t]{0.48\linewidth}
\centering
\includegraphics[scale=0.28]{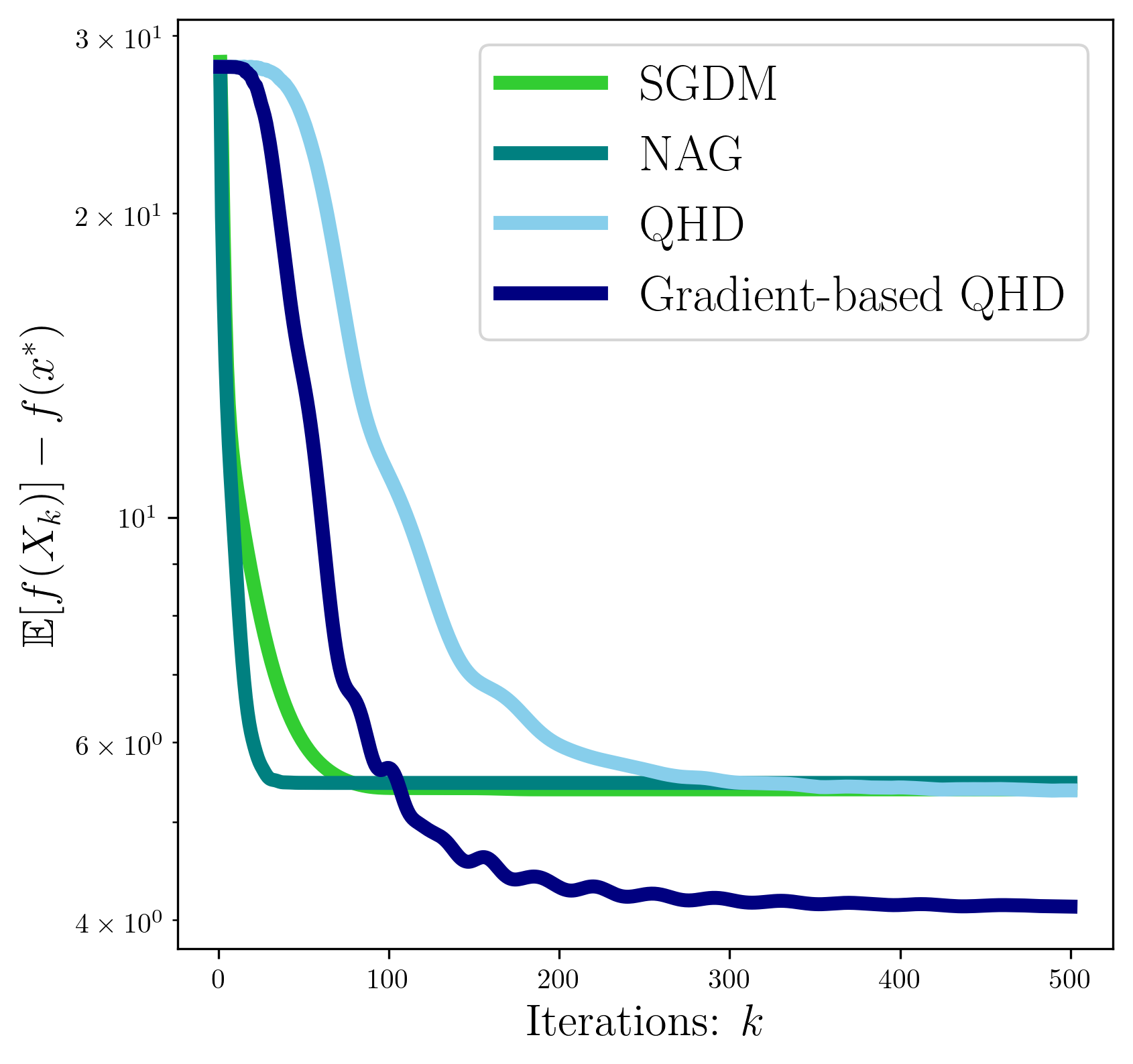}
\caption{Function value}
\label{fig: st-func}
\end{subfigure}
\begin{subfigure}[t]{0.48\linewidth}
\centering
\includegraphics[scale=0.28]{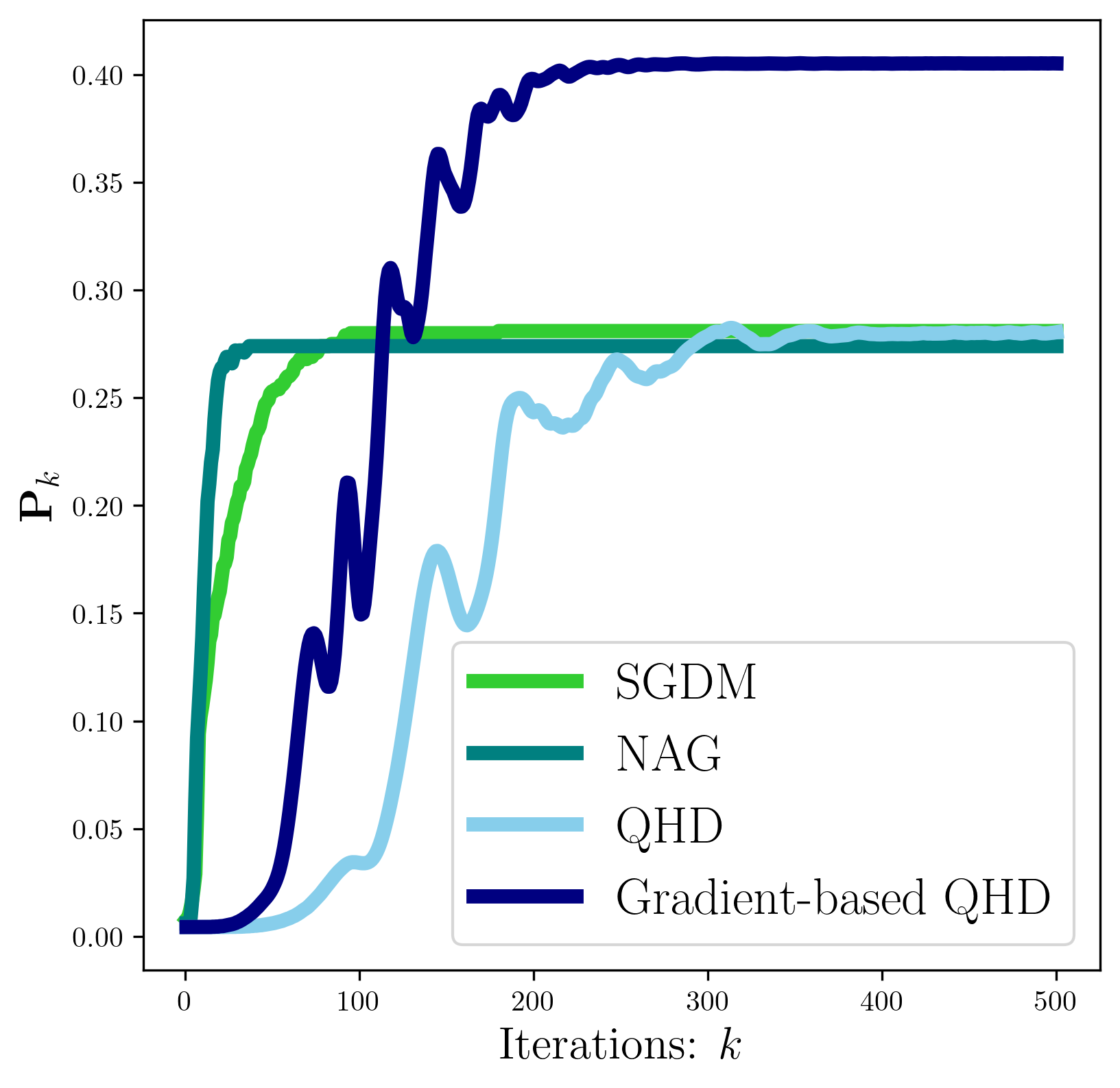}
\caption{Success probability}
\label{fig: st-sp}
\end{subfigure}
\caption{Numerical performance comparison of various algorithms on the Styblinski-Tang function.  }
\label{fig:st}
\end{figure}

Furthermore,~\fig{st} demonstrates the numerical performance involving function values and success probability.
While QHD does not depict an obvious advantage against stochastic gradient descent with momentum (SGDM)~\cite{shi2023learning, shi2024hyperparameters} and NAG, gradient-based QHD demonstrates a much more concentrated solution distribution as the iterations progress, leading to an improved global convergence.
These findings motivate us to conduct a detailed investigation into gradient-based QHD and its potential in continuous optimization. 

\subsection{Overview of contributions}
\label{subsec: contribution}
Our contributions are listed as follows:
\begin{itemize}
    \item We propose \textbf{gradient-based QHD} for continuous optimization problems.
    With a novel Lyapunov function approach involving quantum operators, we provide a convergence analysis of gradient-based QHD in continuous time. In particular, we establish the convergence rate of gradient-based QHD in both function values (\thm{convergence-1}) and gradient norms (\thm{convergence-2}).
    \item We develop a quantum algorithm that simulates discrete-time gradient-based QHD to solve optimization problems (\algo{grad-qhd}). With a gate complexity linear in problem dimension $d$, this quantum algorithm is readily scalable to handle large-scale problems in practice.
    \item In addition to the theoretical analysis, we conduct a numerical study to evaluate the performance of gradient-based QHD in both convex and non-convex optimization. 
    Our results show that gradient-based QHD achieves an enhanced performance compared to standard QHD and other prominent classical optimization algorithms. In some cases, gradient-based QHD yields solutions that are an order of magnitude better than those obtained by other methods.
\end{itemize}

\paragraph{Organization.}
This work is structured as follows. First, we survey related classical and quantum optimization algorithms in \sec{related_work}. Next, we formulate gradient-based QHD in \sec{hamiltonian-dynamics}, with several continuous-time convergence results established in \sec{convergence-analysis}. In the subsequent~\sec{quantum-complexity}, we show that gradient-based QHD can be efficiently implemented using a quantum computer. Finally, in~\sec{numerical}, we present the numerical experiments comparing gradient-based QHD with several other quantum and classical optimization algorithms. We conclude this paper in \sec{conclusion}.

\section{Related work}\label{sec:related_work}
\paragraph{NAG-related algorithms and ODEs.}
There has been a long history of analyzing NAG-related optimization algorithms~\cite{o2015adaptive,giselsson2014monotonicity}.
\citet{su2016differential} sheds new light on the understanding and design of NAG using an ODE perspective. 
In~\cite{wibisono2016variational,betancourt2018symplectic,wilson2021lyapunov}, a Lagrangian (or Hamiltonian) framework is used to describe a larger class of ODEs that provides a unified perspective for the acceleration phenomenon in first-order optimization.
Notably, accelerated gradient descent has been investigated in non-Euclidean settings, including mirror descent~\cite{krichene2015accelerated,lin2019efficient} and more generally, Riemannian manifolds~\cite{siegel2019accelerated,ahn2020nesterov,han2023riemannian,kong2024quantitative}.

\vspace{-4mm}
\paragraph{Quantum algorithms for unconstrained optimization.}
Using quantum computers to accelerate bottleneck steps in classical optimization algorithms has shown promise in achieving quantum advantage~\cite{rebentrost2019quantum,kerenidis2020quantum,liu2024towards}. However, their practical performance requires further investigation due to the non-trivial overhead involved in extracting classical information from quantum states~\cite{yuen2023improved}.
Motivated by the interplay between NAG and ODEs, another line of research proposes leveraging quantum Hamiltonian dynamics as an algorithmic surrogate for addressing unconstrained optimization problems~\cite{zhang2021quantum,liu2023quantum,leng2023quantum}, with recent extensions to open quantum systems~\cite{chen2023quantum}, constrained optimization~\cite{augustino2023quantum}, and discrete optimization~\cite{cheng2024quantum}.
This approach is particularly effective for highly non-convex problems~\cite{leng2023quantum2} and well-suited for hardware implementation~\cite{leng2024expanding,kushnir2024qhdopt}. More discussions are available in~\append{two-paradigms}.

\section{Gradient-based Hamiltonian dynamics}\label{sec:hamiltonian-dynamics}
\subsection{Classical Hamiltonian flows with gradient}
Inspired by the Bregman Lagrangian~\cite{wibisono2016variational} and the high-resolution ODE framework~\cite{shi2022understanding}, we propose to study the following Lagrangian function:
\begin{equation}\label{eqn:high-res-lagrangian}
\begin{split}
    \mathcal{L}(t,X,\dot{X}) &= \frac{t^3}{2}|\dot{X}|^2 - \alpha t^3 \dot{X}\trans \nabla f(X) \\
    &- \frac{\beta t^3}{2}|\nabla f(X)|^2 - (t^3 + \gamma t^2)f(X),
\end{split}
\end{equation}
where $\alpha, \beta, \gamma \in \R$ are real-valued parameters that will be specified later.
Compared with the standard Bregman Lagrangian~\eqn{bregman-lagrangian}, our new Lagrangian function  explicitly incorporates the gradient $\nabla f$ into the Lagrangian. This design is motivated by the convergence analysis in the high-resolution ODE, where the Lyapunov function can be interpreted as a generalized energy functional that includes gradient information. More details are provided in~\append{review-high-res-ode}.

By applying the Legendre transformation, we obtain the Hamiltonian function associated with~\eqn{high-res-lagrangian}:
\begin{equation}\label{eqn:high-res-ham}
\begin{split}
     H(t,X,P) =  &\; \sup_{Y}\left(P\trans Y - \mathcal{L}(t,X,Y)\right) \\
                   =  &\; \frac{1}{2}\|t^{-3/2}P+\alpha t^{3/2}\nabla f\|^2    \\ 
                       &\; + \frac{\beta t^3}{2}\|\nabla f(X)\|^2 + (t^3 + \gamma t^2)f(X).
\end{split}
\end{equation}
Thus, we derive the Hamiltonian dynamics:
\begin{equation}\label{eqn:hamilton-equations-a}
    \dot{X} = \frac{\partial H}{\partial P} = \frac{1}{2t^3} P(t) + \alpha \nabla f(X(t)),
\end{equation}
\begin{equation}\label{eqn:hamilton-equations-b}
    \begin{split}
        \dot{P} = -\frac{\partial H}{\partial X} = &- \nabla^2 f(X)\left(\alpha P + (\alpha^2+\beta)t^3\nabla f(X)\right) \\
        &- (t^3+\gamma t^2) \nabla f(X).
    \end{split}
\end{equation}

\paragraph{Connection with high-resolution ODEs.}
It is worth noting that while our Lagrangian function shares certain similarities with high-resolution ODEs, they are not equivalent.
By substituting~\eqn{hamilton-equations-b} into~\eqn{hamilton-equations-a}, and choosing 
\begin{equation}\label{eqn:parameter-matching}
        \beta / \alpha = \sqrt{s},\quad
        \gamma - 3\alpha = 3\sqrt{s}/2,
\end{equation}
we can transform the Hamiltonian dynamics to a second-order ODE:
\begin{equation}\label{eqn:el-final}
\begin{split}
     &\ddot{X}(t) + \frac{3}{t}\dot{X}(t) + \sqrt{s} \nabla^2 f(X(t)) \dot{X} \\
     &+ \left(1 + \frac{3\sqrt{s}}{2t}\right)\nabla f(X(t)) = \frac{\sqrt{s}}{2 t^3} \nabla^2 f(X(t)) P.
\end{split}
\end{equation}
Formally, the left-hand side corresponds to the high-resolution ODE derived by~\citet{shi2022understanding} (for details, see~\append{review}).
The right-hand side of~\eqn{el-final} is asymptotically vanishing as the momentum $P$ eventually decays to $0$.\footnote{Details are available in~\sec{convergence-analysis}.}
Therefore, we expect the Hamiltonian dynamics to exhibit long-term behavior similar to that of high-resolution ODEs; however, we leave a detailed analysis for future research.

Due to its distinctive properties, the proposed Lagrangian function is of independent theoretical interest. In this work, we deliberately \emph{do not} restrict the parameters $\alpha, \beta$, and $\gamma$ to the specific values associated with the high-resolution ODE case~\eqn{parameter-matching}.
This flexibility allows us to explore a broader class of dynamical systems,  potentially leading to novel insights and improved algorithms for continuous optimization.

\subsection{Canonical quantization}
We introduce canonical quantization, a standard procedure that maps a classical Hamiltonian function to a quantum Hamiltonian operator. The Hamiltonian operator serves as an infinitesimal generator of a quantum evolution, which will be the core of our quantum optimization algorithms.

A classical-mechanical system is described by a Hamiltonian function $H(X, P, t)$. In contrast, a quantum-mechanical system is governed by a quantum Hamiltonian operator $\hat{H} \colon L^2(\R^d) \to L^2(\R^d)$. 
The canonical quantization procedure allows us to translate a known classical Hamiltonian function to a corresponding quantum Hamiltonian by the mapping:
\begin{equation}
    x_j \mapsto \hat{x}_j,\quad p_j \mapsto \hat{p}_j \coloneqq -i\frac{\partial}{\partial x_j}.
\end{equation}
Here, $x_j$ and $p_j$ are the position and momentum variables describing a classical object living in a $d$-dimensional space $\R^d$, respectively, with the dimension indices $i \in [d]$. Correspondingly, $\hat{x}_j$ and $\hat{p}_j$ are the quantum position and momentum operators acting on wave functions $\psi(x) \in L^2(\R^d)$:
\begin{align*}
    (\hat{x}_j \psi)(x) = x_j \psi(x),\quad (\hat{p}_j \psi)(x) = -i \frac{\partial}{\partial x_j} \psi(x).
\end{align*}
Using this dictionary, we obtain the quantum Hamiltonian operator corresponding to the Hamiltonian function~\eqn{high-res-ham}:
\begin{align}\label{eqn:high-res-qhd-ham}
    \hat{H}(t) &= \frac{1}{2}\sum^d_{j=1}A^2_j + \frac{\beta}{2}t^3\|\nabla f\|^2 +  (t^3 + \gamma t^2)f ,
\end{align}
where for $j = 1,\dots, d$, the operator $A_j$ is defined by
\begin{align}
    A_j = t^{-3/2}\hat{p}_j + \alpha t^{3/2}\hat{v}_j,\quad \hat{v}_j \psi \coloneqq \frac{\partial f}{\partial {x_j}} \psi.
\end{align}
with $\hat{v}_j$ a multiplicative operator acting on a wave function $\psi$. 
Due to the non-commutativity of quantum operators, the square of the operator $A_j$ is expressed as
$$A^2_j = t^{-3}\hat{p}^2_j + \alpha \{\hat{p}_j, \hat{v}_j\} + \alpha^2 t^3 \hat{v}^2_j,$$
where $\{A,B\} \coloneqq AB+BA$ denotes the anti-commutator of operators. 

Given a quantum Hamiltonian operator $\hat{H}(t)$, the quantum evolution generated by the Hamiltonian operator is governed by the Schr\"odinger equation: 
\begin{align}\label{eqn:grad-qhd}
    i \partial_t \Psi(t,x) = \hat{H}(t) \Psi(t,x),
\end{align}
for time $0 < T_0 \le t \le T$, subject to an initial condition $\Psi(T_0,x) = \Psi_0(x)$. 
The quantum wave function $\Psi(t)$ is complex-valued, and its modulus squared $|\Psi(t)|^2$ corresponds to a probability density that characterizes the distribution of the quantum particle in the real space $\R^d$.

\paragraph{Connection with the original QHD.}
In~\citet{leng2023quantum}, the Hamiltonian was derived from the Bregman Lagrangian via Feymann's path integral technique. Our derivation relying on canonical quantization takes a different yet complementary approach. The resulting Hamiltonian operator $\hat{H}(t)$ naturally encompasses the original QHD as a special case by choosing the parameters $\alpha = \beta = \gamma = 0$.

\section{Convergence analysis}\label{sec:convergence-analysis}
In this section, we focus on the convergence results of the newly derived quantum dynamics.
Throughout this section, we assume $f(x^*) = 0$ and $x^* = 0$. This can always be achieved by considering the translated objective function $f(x) \leftarrow f(x+x^*) - f(x^*)$.

\subsection{Case 1: convergence to global minimum}
First, we consider a simple case where no gradient norm appears in the Hamiltonian~\eqn{high-res-qhd-ham}, i.e., $\beta = 0$.
In this case, we can prove that the dynamics converge to the global minimum of $f$. 

\begin{theorem}\label{thm:convergence-1}
    Let $\beta = 0$ and $\gamma \ge \max(3\alpha, 0)$ for any $\alpha \in \R$. For any $1/\alpha \ge T_0 > 0$, we denote $\Psi(t,x)$ as the solution to the PDE~\eqn{grad-qhd} for $t \ge T_0$. Let $X_t$ be a random variable distributed according to the probability density $|\Psi(t,x)|^2$. Then, for a convex and continuously differentiable $f$, we have
    $$\mathbb{E}[f(X_t)] \le \frac{\mathscr{K}_0 + \mathscr{D}_0}{t^2 + \omega t}, \quad \omega = \gamma - 3\alpha \ge 0,$$
    where $\mathscr{K}_0 = T^{-4}_0\langle \Psi(T_0)|(-\Delta)|\Psi(T_0)\rangle$ and
    $$\mathscr{D}_0 = \mathbb{E}\left[\|\nabla f(X_{T_0})\|^2+4\|X_{T_0}\|^2+(T^2_0+\omega T_0)f(X_{T_0})\right].$$
    In other words, $\mathbb{E}[f(X_t)] \le O(t^{-2})$.
\end{theorem}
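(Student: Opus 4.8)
The plan is to lift the Lyapunov (energy) function technique underlying the classical Su--Boyd--Cand\`es analysis to the quantum setting, working throughout with expectation values $\langle \hat O\rangle_t \coloneqq \langle\Psi(t)|\hat O|\Psi(t)\rangle$. Concretely, I would introduce a time-dependent Lyapunov functional of the form
\[
\mathcal{W}(t) = (t^2+\omega t)\langle f\rangle_t + \tfrac12\bigl\langle \|\hat G(t)\|^2\bigr\rangle_t,
\]
where $\hat G_j(t)$ is a quantum analogue of the classical comoving coordinate $2X + t\dot X$, assembled from the position operator $\hat x_j$, the velocity operator $t^{-3}\hat p_j + \alpha\hat v_j$ (read off from the Heisenberg equation for $\hat x_j$), and possibly an explicit gradient piece. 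The aim is to choose $\hat G$ and the scalar weights so that (i) $\mathcal{W}(t)$ is nonincreasing in $t$, and (ii) the nonnegative second term at $t=T_0$ dominates the kinetic, gradient-norm, and position contributions collected in $\mathscr K_0+\mathscr D_0$. Granting these, the conclusion is immediate: since $f\ge 0$ forces $\langle f\rangle_t\ge 0$, we obtain $(t^2+\omega t)\langle f\rangle_t \le \mathcal W(t)\le \mathcal W(T_0)\le \mathscr K_0+\mathscr D_0$, which is exactly the claimed $O(t^{-2})$ bound.

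The engine of the monotonicity computation is the Ehrenfest identity
\[
\frac{d}{dt}\langle\hat O\rangle_t = \langle\partial_t\hat O\rangle_t + i\langle[\hat H(t),\hat O]\rangle_t,
\]
applied to each operator in $\mathcal W$. I would first assemble the elementary commutators generated by $\hat H$ with $\beta=0$: using $[\hat p_j,\hat x_k]=-i\delta_{jk}$, $[\hat p_j, f]=-i\hat v_j$, and the crucial $[\hat p_j,\hat v_k]=-i(\nabla^2 f)_{jk}$, one finds $[\hat H,\hat x_j]=-it^{-3/2}A_j$ (hence $\tfrac{d}{dt}\langle\hat x_j\rangle = t^{-3}\langle\hat p_j\rangle+\alpha\langle\hat v_j\rangle$), together with the analogous but messier expressions for $[\hat H,\hat p_j]$, $[\hat H,f]$, and $[\hat H,\hat v_j]$. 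Differentiating $\mathcal W$ then produces the explicit time-derivatives of $(t^2+\omega t)$ and of $\hat G(t)$ plus the commutator contributions; most of these should telescope by design, leaving a leading term proportional to $(t^2+\omega t)\langle f\rangle_t$ balanced against a cross term $\propto \langle \sum_j\hat x_j\hat v_j\rangle_t$.

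Convexity enters precisely at this last step. Since $f(x^*)=0$ at $x^*=0$, convexity yields the pointwise inequality $f(x)\le x\cdot\nabla f(x)$, whose quantum expectation reads $\langle f\rangle_t \le \langle\sum_j \hat x_j\hat v_j\rangle_t$ (the operators being multiplicative, hence commuting). Feeding this into the residual terms is what forces $\tfrac{d}{dt}\mathcal W(t)\le 0$, and the hypotheses $\gamma\ge\max(3\alpha,0)$, $\omega=\gamma-3\alpha\ge 0$, and $T_0\le 1/\alpha$ are exactly the sign conditions keeping the remaining coefficients nonnegative after the substitution (in particular $\alpha T_0\le 1$ is what lets the gradient-norm piece be absorbed into $\mathscr D_0$). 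Any Hessian-valued terms $(\nabla^2 f)_{jk}$ thrown off by $[\hat p_j,\hat v_k]$ must either cancel among themselves or be controlled via $\nabla^2 f\succeq 0$.

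The main obstacle is precisely the bookkeeping of these non-commutative corrections: unlike the classical ODE, where $X$, $\dot X$, and $\nabla f$ are ordinary functions, here the square $\|\hat G\|^2$ and the anticommutator $\{\hat p_j,\hat v_j\}$ hidden in $A_j^2$ generate genuine Hessian-valued operator terms with no classical counterpart. I expect the correct $\hat G(t)$ and its time-dependent weights to be \emph{pinned down} by demanding cancellation of these terms rather than guessed a priori, and verifying that the surviving terms are controlled solely by $\langle f\rangle_t$ and the convexity inequality, uniformly over all admissible $\alpha,\gamma$, is the technical heart of the argument.
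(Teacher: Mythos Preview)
Your proposal is correct and essentially identical to the paper's approach: the Lyapunov operator used is precisely $\hat O(t)=\tfrac12\sum_j(t^{-2}\hat p_j+\alpha t\hat v_j+2\hat x_j)^2+(t^2+\omega t)f$, i.e.\ your $\hat G_j=2\hat x_j+t\cdot(\text{velocity})$, and monotonicity follows from the Ehrenfest identity plus the convexity inequality $f\le x^\top\nabla f$ exactly as you describe. The Hessian terms you anticipated do cancel (ultimately because $[A_j,A_k]=0$), and the paper streamlines the commutator bookkeeping by first rewriting $\hat O=t^{-1}\hat H+\sum_j(2\hat x_j^2+t^{-1/2}\{A_j,\hat x_j\})-3\alpha t f$, which reduces $i[\hat H,\hat O]$ to a handful of elementary commutators and yields the clean identity $\partial_t\hat O+i[\hat H,\hat O]=(2t+\omega)(f-x^\top\nabla f)-2\omega f\le 0$.
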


\begin{remark}
    $\mathscr{K}_0$ represents the initial kinetic energy (rescaled by $T^{-4}_0$). Its value is independent of $f$ and typically does not depend on the dimension $d$, e.g., when the initial state $\Psi_{0}$ is a standard Gaussian wave.
    In general, $\mathscr{D}_0$ can scale linearly in $d$ due to the presence of $\|\nabla f\|^2$.
\end{remark}

The convergence rate is proved by constructing a Lyapunov function $\mathcal{E}(t)$ that is non-increasing in time. The Lyapunov function is defined by
\begin{align*}
    \mathcal{E}(t) &= \langle \hat{O}(t) \rangle_t \coloneqq \langle \Psi(t) | \hat{O}(t) | \Psi(t)\rangle,\\
    \hat{O}(t) &= \frac{1}{2}\sum^d_{j=1}\left(t^{-2}\hat{p}_j + \alpha t \hat{v}_j + 2\hat{x}_j\right)^2 + \left(t^2+ \omega t\right)f.
\end{align*}
Here, $\omega = \gamma - 3\alpha \ge 0$ because $\gamma \ge \max(3\alpha, 0)$.

\begin{lemma}\label{lem:lyapunov-1}
    Let $\beta = 0$ and $\gamma \ge \max(3\alpha, 0)$.
    For any $t > 0$, we have $\dot{\mathcal{E}}(t) \le 0$.
\end{lemma}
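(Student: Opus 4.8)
The plan is to differentiate $\mathcal{E}(t) = \langle\Psi(t)|\hat{O}(t)|\Psi(t)\rangle$ directly, using the Schr\"odinger equation~\eqn{grad-qhd} and the Hermiticity of $\hat{H}(t)$. Since $\partial_t\Psi = -i\hat{H}\Psi$, the product rule yields the Ehrenfest-type identity
\[
\dot{\mathcal{E}}(t) = \langle \partial_t\hat{O}(t)\rangle_t + i\langle[\hat{H}(t),\hat{O}(t)]\rangle_t,
\]
where $\partial_t\hat{O}$ is the explicit time derivative of the observable and $[\cdot,\cdot]$ is the commutator; note $i[\hat{H},\hat{O}]$ is Hermitian, so the right-hand side is real. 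The proof then reduces to two computations — the explicit derivative $\partial_t\hat{O}(t)$ and the commutator $[\hat{H}(t),\hat{O}(t)]$ — followed by a convexity argument.

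To keep the bookkeeping clean I would introduce the velocity operator $\hat{w}_j \coloneqq t^{-3}\hat{p}_j + \alpha\hat{v}_j$, so that $A_j = t^{3/2}\hat{w}_j$ and the inner operator of $\hat{O}$ becomes $G_j \coloneqq t\hat{w}_j + 2\hat{x}_j$. With $\beta=0$ this rewrites $\hat{H}(t) = \tfrac{t^3}{2}\sum_j\hat{w}_j^2 + (t^3+\gamma t^2)f$ and $\hat{O}(t) = \tfrac12\sum_j G_j^2 + (t^2+\omega t)f$. The canonical commutation relations $[\hat{x}_j,\hat{p}_k]=i\delta_{jk}$, $[\hat{p}_j,\hat{v}_k]=-i\,\partial_j\partial_k f$, together with the fact that $\hat{x}_j,\hat{v}_k$ are multiplication operators (hence mutually commuting and commuting with $f$), then give the crucial reduced relations $[\hat{w}_j,\hat{w}_k]=0$ (the Hessian is symmetric), the \emph{c-number} commutator $[\hat{w}_j,G_k]=-2it^{-3}\delta_{jk}$, and $[\hat{w}_j,f]=-it^{-3}\hat{v}_j$, $[G_k,f]=-it^{-2}\hat{v}_k$. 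Because $\hat{w}_j$ and $G_k$ have a scalar commutator, the quartic piece $[\hat{w}_j^2,G_k^2]$ collapses to $-4it^{-3}\delta_{jk}\{\hat{w}_j,G_j\}$ with no operator-valued remainder, and a short computation gives $\partial_t G_j = -2\hat{w}_j + 3\alpha\hat{v}_j$.

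Assembling the pieces, I expect the anticommutator terms $\{\hat{w}_j,G_j\}$ produced by $i[\,\tfrac{t^3}{2}\sum\hat{w}_j^2,\tfrac12\sum G_k^2\,]$ to cancel exactly against those in $\langle\partial_t\hat{O}\rangle_t$, while the $\{\hat{w}_j,\hat{v}_j\}$ terms coming from the two cross-commutators with $f$ cancel among themselves once one substitutes $G_j=t\hat{w}_j+2\hat{x}_j$ and uses $\{\hat{x}_j,\hat{v}_j\}=2\hat{x}_j\hat{v}_j$ and $\omega=\gamma-3\alpha$. What survives is the clean expression
\[
\dot{\mathcal{E}}(t) = -2(t+\omega)\Big\langle \textstyle\sum_j \hat{x}_j\hat{v}_j - f \Big\rangle_t - \omega\,\langle f\rangle_t.
\]
Since $\sum_j \hat{x}_j\hat{v}_j$ is the multiplication operator $x\mapsto\langle x,\nabla f(x)\rangle$, convexity of $f$ with $x^*=0$ and $f(x^*)=0$ gives the pointwise bound $\langle x,\nabla f(x)\rangle \ge f(x)\ge 0$; integrating against $|\Psi|^2$ shows both $\langle\sum_j\hat{x}_j\hat{v}_j - f\rangle_t\ge 0$ and $\langle f\rangle_t\ge 0$. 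As $t>0$ and $\omega\ge0$ (guaranteed by $\gamma\ge\max(3\alpha,0)$), both terms are non-positive, so $\dot{\mathcal{E}}(t)\le0$.

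The main obstacle is the commutator computation: carried out naively it generates Hessian contributions $\partial_j\partial_k f$ from $[\hat{p}_j,\hat{v}_k]$ that threaten to obstruct cancellation. The key structural facts that dispose of them are $[\hat{w}_j,\hat{w}_k]=0$ and the scalar commutator $[\hat{w}_j,G_k]$; once these are established, every operator-valued remainder cancels and the argument terminates in a one-line convexity estimate. I would therefore invest the care in verifying these two reduced relations and in tracking the cancellation of the $\{\hat{w}_j,G_j\}$ and $\{\hat{w}_j,\hat{v}_j\}$ terms, treating the remaining steps as routine algebra.
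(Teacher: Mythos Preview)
Your proposal is correct and lands on the identity
\[
\partial_t\hat{O}+i[\hat{H},\hat{O}] \;=\; -2(t+\omega)\,\big(x^\top\nabla f - f\big)\;-\;\omega f,
\]
from which $\dot{\mathcal{E}}(t)\le 0$ follows by convexity, $f\ge 0$, and $\omega\ge 0$. (Your coefficients are actually the right ones; the paper's displayed identity has $-(2t+\omega)$ and $-2\omega$ in place of $-2(t+\omega)$ and $-\omega$, a harmless slip that does not affect the sign argument.)

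The high-level strategy---Ehrenfest identity plus commutator algebra plus a one-line convexity bound---is the same as the paper's, but the bookkeeping is organized differently. The paper first rewrites $\hat{O}=\tfrac{1}{t}\hat{H}+\sum_j(2x_j^2+t^{-1/2}\{A_j,x_j\})-3\alpha t f$, so that $[\hat{H},\hat{H}]$ drops out and the remaining commutators are read off from the pre-packaged list in \lem{commutation}. You instead introduce $\hat{w}_j=t^{-3}\hat{p}_j+\alpha\hat{v}_j$ and observe directly that $[\hat{w}_j,\hat{w}_k]=0$ (by symmetry of the Hessian) and that $[\hat{w}_j,G_k]=-2it^{-3}\delta_{jk}$ is a c-number; this makes it structurally transparent why no Hessian terms survive, and the cancellations of $\{\hat{w}_j,G_j\}$ and $\{\hat{w}_j,\hat{v}_j\}$ then fall out with minimal algebra. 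The paper's route has the virtue of recycling \lem{commutation} for the $\beta>0$ case as well, while yours is arguably more self-contained for this lemma alone. One small wording point: the $\tfrac{3\alpha}{2}\{\hat{v}_j,G_j\}$ contribution you need in the cancellation comes from $\partial_t\hat{O}$ (via $\partial_t G_j=-2\hat{w}_j+3\alpha\hat{v}_j$), not from a cross-commutator; but you clearly use $\omega=\gamma-3\alpha$ at exactly the right place, so the computation goes through.
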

In \lem{lyapunov-1}, we prove that the function $\mathcal{E}(t)$ is non-increasing in time, as a result,
\begin{align*}
    t^2 \langle f \rangle_t \le \mathcal{E}(t) \le \mathcal{E}(T_0) \implies \langle f \rangle_t \le \frac{\mathcal{E}(T_0)}{t^2}.
\end{align*}
Moreover, we note that
\begin{align*}
    \mathcal{E}(T_0) &\le \langle \Psi(t)\left|\frac{1}{T^4_0}\left(-\Delta\right) + \alpha^2 T^2_0 \|\nabla f\|^2 + 4\|x\|^2\right|\Psi(t)\rangle \\
    &+ (T^2_0+\omega T_0)\langle \Psi(t)| f|\Psi(t)\rangle,
\end{align*}
which proves~\thm{convergence-1}.

The details of \lem{lyapunov-1} is presented in~\append{lemma-1-proof}. The technical proof heavily relies on the commutation relations between various non-commuting quantum operators that appeared in the Lyapunov function. We summarize the common commutation relations used in this work in \lem{commutation}, which might be of independent interest in future work.

\begin{lemma}[Commutation relations]\label{lem:commutation}
    Let $A_j$, $p_j$, and $x_j$ be the same as above. For any $1 \le j, k \le d$, we have the following identities:
    \begin{enumerate}
        \item $i[A^2_j, f] = t^{-3}\{p_j, v_j\} + 2\alpha v^2_j$,
        \item $i[f, \{A_j, x_j\}] = -2 t^{-3/2}x_j v_j$,
        \item $i[A^2_j, x^2_k] = \begin{cases}
        \frac{2}{t^3}\{p_j, x_j\} + 4\alpha x_jv_j & (j= k)\\
        0 & (j\neq k)
        \end{cases}$,
        \item $i[A^2_j, \{A_k, x_k\}] = \begin{cases}
        \frac{4}{t^{3/2}}A^2_j & (j= k)\\
        0 & (j\neq k)
        \end{cases}$,
        \item $i[v^2_j, \{A_k, x_k\}] =
        -4 t^{-3/2} \left(\frac{\partial^2 f}{\partial x_j x_k}\right) x_k v_j$.
    \end{enumerate}
\end{lemma}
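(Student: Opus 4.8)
The plan is to reduce all five identities to a small set of primitive commutation relations and then expand mechanically using the bilinear Leibniz rules $[X,YZ] = [X,Y]Z + Y[X,Z]$ and $[XY,Z]=X[Y,Z]+[X,Z]Y$. The primitives I would record first are: the canonical relation $[p_j, x_k] = -i\delta_{jk}$; the fact that any two multiplication operators (among $f$, $v_j$, $x_j$, and the Hessian entries $\partial^2 f/\partial x_j\partial x_k$) commute with one another; and the single rule governing every interesting cross term, namely $[p_j, g] = -i\,\partial g/\partial x_j$ for a multiplication operator $g$. Specializing this last rule gives $[p_j, f] = -i v_j$ and $[p_j, v_k] = -i\,\partial^2 f/\partial x_j\partial x_k$, which are the only places where the derivatives of $f$ enter any of the computations.

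With these primitives in hand, identities (1)--(3) are direct expansions. For (1) I would use the expansion $A_j^2 = t^{-3}p_j^2 + \alpha\{p_j,v_j\} + \alpha^2 t^3 v_j^2$ already recorded in the text, note that $v_j^2$ commutes with $f$, and compute $[p_j^2,f] = -i\{p_j,v_j\}$ and $[\{p_j,v_j\},f] = -2i v_j^2$; multiplying by $i$ collects the two stated terms. Identity (2) uses only $[f,A_j]=i t^{-3/2} v_j$ together with $[f,x_j]=0$. Identity (3) splits into the two cases: for $j\ne k$ every primitive commutator vanishes, while for $j=k$ the single new computation is $[p_j^2, x_j^2] = -2i\{p_j,x_j\}$, obtained in turn from $[p_j, x_j^2] = -2ix_j$.

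The two structurally most interesting identities are (4) and (5), which I would treat last. For (4) the key observations are that $[A_j, x_j] = t^{-3/2}[p_j,x_j] = -i t^{-3/2}$ is a \emph{scalar} (c-number), and that $[A_j, A_k] = 0$ for all $j,k$ --- the latter because the two cross terms $\alpha[p_j,v_k]$ and $\alpha[v_j,p_k]$ cancel by symmetry of the Hessian. Scalarity of $[A_j,x_j]$ gives the ladder relation $[A_j^2, x_j] = 2[A_j,x_j]A_j$, from which $[A_j^2,\{A_j,x_j\}] = 4[A_j,x_j]A_j^2 = -4i t^{-3/2} A_j^2$ after multiplying by $i$; for $j\ne k$ both $[A_j,x_k]$ and $[A_j,A_k]$ vanish, giving $0$. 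Identity (5) is the only one carrying an explicit Hessian: here I would compute $[v_j^2, p_k] = 2i v_j\,\partial^2 f/\partial x_j\partial x_k$, hence $[v_j^2, A_k] = 2i t^{-3/2} v_j\,(\partial^2 f/\partial x_j\partial x_k)$, and then use that $v_j$, $x_k$, and the Hessian entry are all multiplicative (hence freely commuting) to combine the $A_k x_k$ and $x_k A_k$ contributions into the single stated term.

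The main obstacle is not any individual computation --- each is short --- but bookkeeping the non-commutativity consistently across all five relations, in particular keeping careful track of operator ordering inside the anticommutators and ensuring the Hessian-symmetry cancellation in $[A_j,A_k]$ is applied correctly. Once $[A_j,x_j]$ is identified as a c-number and the derivatives of $f$ are routed exclusively through $[p_j,f]$ and $[p_j,v_k]$, the remaining steps are entirely mechanical.
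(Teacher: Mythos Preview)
Your proposal is correct and follows essentially the same route as the paper: both reduce everything to the primitive $[p_j,g]=-i\,\partial_j g$ and the canonical $[p_j,x_k]=-i\delta_{jk}$, then expand via the Leibniz rules. The only cosmetic differences are that the paper packages the intermediate identities $i[p_j^2,g]=\{p_j,\partial_j g\}$ and $i[\{p_j,h\},g]=2h\,\partial_j g$ as standalone auxiliary lemmas, and for item~(4) carries out an explicit four-term expansion of $A_j^3x_j+A_j^2x_jA_j-A_jx_jA_j^2-x_jA_j^3$ where your c-number observation $[A_j,x_j]=-it^{-3/2}$ short-circuits the computation.
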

For the proof, please refer to~\append{commutation-proof}.

\subsection{Case 2: convergence to first-order stationary point}
We denote the function $G(x)$ as the square of the gradient norm of $f$, i.e., 
\begin{align}\label{eqn:grad-norm}
    G(x) \coloneqq \nabla f(x)\trans \nabla f(x) = \sum^d_{j=1} \left|\frac{\partial f(x)}{\partial x_j}\right|^2.
\end{align}

\begin{theorem}\label{thm:convergence-2}
    Let $\gamma \ge \max(3\alpha, 0)$ and $\beta > 0$. 
    For any $\min(1/\alpha, \sqrt{2/\beta}) \ge T_0 > 0$, we denote $\Psi(t,x)$ as the solution to the PDE~\eqn{grad-qhd}. Let $X_t$ be a random variable distributed according to the probability density $|\Psi(t,x)|^2$. 
    Then, for a convex and continuously differentiable $f$ such that its gradient norm satisfies the following identity:
    \begin{align}\label{eqn:convex-G}
        G(x) - \nabla G(x)\trans x \le 0,
    \end{align}
    we have 
    $$\mathbb{E}[\|\nabla f(X_t)\|^2] \le \frac{2\left(\mathscr{K}_0 + \mathscr{D}'_0\right)}{\beta t^2},$$
    where $\mathscr{K}_0$ is the same as in~\thm{convergence-1} and
    $$\mathscr{D}'_0 = \mathbb{E}\left[2\|\nabla f(X_{T_0})\|^2+4\|X_{T_0}\|^2+(T^2_0+\omega T_0)f(X_{T_0})\right],$$
    where $\omega = \gamma - 3\alpha$.
    In other words, $\mathbb{E}[\|\nabla f(X_t)\|^2] \le O(t^{-2})$.
\end{theorem}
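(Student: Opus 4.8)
The plan is to lift the Lyapunov-function argument of \thm{convergence-1} to the regime $\beta>0$, where the Hamiltonian \eqn{high-res-qhd-ham} carries the additional potential $\tfrac{\beta}{2}t^{3}G$. Since the goal is now to bound $\langle G\rangle_t = \mathbb{E}[\|\nabla f(X_t)\|^2]$ rather than the function value, I would augment the Lyapunov operator of \lem{lyapunov-1} by a gradient-norm term and take
$$\hat{O}(t) = \frac{1}{2}\sum_{j=1}^{d}\left(t^{-2}\hat{p}_j + \alpha t\hat{v}_j + 2\hat{x}_j\right)^2 + \frac{\beta}{2}t^2 G + (t^2+\omega t)f,\qquad \omega = \gamma-3\alpha\ge 0,$$
and set $\mathcal{E}(t)=\langle\hat{O}(t)\rangle_t$. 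Because the squared operator has nonnegative expectation and $f\ge 0$ (as $f(x^*)=0$, $x^*=0$, and $f$ is convex), one has $\mathcal{E}(t)\ge \tfrac{\beta}{2}t^2\langle G\rangle_t$. Hence, once I show $\dot{\mathcal{E}}(t)\le 0$, the chain $\tfrac{\beta}{2}t^2\langle G\rangle_t\le\mathcal{E}(t)\le\mathcal{E}(T_0)$ yields $\langle G\rangle_t\le 2\mathcal{E}(T_0)/(\beta t^2)$, and the theorem reduces to bounding the initial energy $\mathcal{E}(T_0)$.

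For the monotonicity I would use the Ehrenfest-type identity $\dot{\mathcal{E}}(t)=\langle\partial_t\hat{O}(t)\rangle_t + i\langle[\hat{H}(t),\hat{O}(t)]\rangle_t$ and split both operators linearly: $\hat{O}=\hat{O}_1+\tfrac{\beta}{2}t^2 G$ and $\hat{H}=\hat{H}_1+\tfrac{\beta}{2}t^3 G$, where $\hat{O}_1$ and $\hat{H}_1$ are precisely the $\beta=0$ Lyapunov operator and Hamiltonian from \thm{convergence-1}. All contributions built solely from $\hat{O}_1$ and $\hat{H}_1$ reproduce the drift already shown to be nonpositive in \lem{lyapunov-1}; crucially, that bound is obtained pointwise via convexity (e.g. $\nabla f(x)\trans x\ge f(x)$), so it remains valid for the present state even though $\Psi(t)$ now evolves under the full $\hat{H}$. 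The genuinely new contribution is the collection of $\beta$-dependent terms, $\partial_t(\tfrac{\beta}{2}t^2 G)=\beta t\,G$ together with $i[\tfrac12\sum_j A_j^2,\tfrac{\beta}{2}t^2 G]$ and $i[\tfrac{\beta}{2}t^3 G,\hat{O}_1]$; every remaining new commutator vanishes because $G$, $f$, and $\hat{x}_j$ are mutually commuting multiplication operators.

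Evaluating those two commutators is the technical heart of the argument, and I expect it to be the main obstacle. I would first record a commutator identity for a generic multiplicative potential $g$, namely $i[A_j^2,g]=t^{-3}\{\hat{p}_j,\partial_j g\}+2\alpha(\partial_j g)\hat{v}_j$, generalizing \lem{commutation}(1), together with $i[\{A_j,\hat{x}_j\},g]=2t^{-3/2}\hat{x}_j\,\partial_j g$, which follows from \lem{commutation}(5) upon writing $G=\sum_k \hat{v}_k^2$. Specializing to $g=G$ and using the relation $B_j\coloneqq t^{-2}\hat{p}_j+\alpha t\hat{v}_j+2\hat{x}_j=t^{-1/2}A_j+2\hat{x}_j$ to expand the squared part of $\hat{O}_1$, the key fact to verify is a near-total cancellation across the two commutators: the $\{\hat{p}_j,\partial_j G\}$ and $(\partial_j G)\hat{v}_j$ terms, which carry the Hessian of $f$, cancel between them, leaving only $-\beta t\sum_j \hat{x}_j\,\partial_j G=-\beta t\,\nabla G\trans\hat{x}$. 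Adding the $\partial_t$ term, the entire $\beta$-dependent contribution collapses to $\beta t\,(G-\nabla G\trans\hat{x})$, whose expectation $\beta t\,\mathbb{E}[G(X_t)-\nabla G(X_t)\trans X_t]$ is nonpositive exactly by the structural hypothesis \eqn{convex-G}. Matching the several powers of $t$ so that this cancellation is exact is the delicate step; it is also precisely what dictates the coefficient $\tfrac{\beta}{2}t^2$ in the Lyapunov term.

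Finally, I would bound the initial energy $\mathcal{E}(T_0)$ to obtain the stated constants. Expanding $\tfrac12\sum_j\langle B_j(T_0)^2\rangle$ and applying Cauchy--Schwarz produces the kinetic contribution $\mathscr{K}_0=T_0^{-4}\langle\Psi(T_0)|(-\Delta)|\Psi(T_0)\rangle$, a term proportional to $\alpha^2 T_0^2\langle\|\nabla f\|^2\rangle$, and a term $4\langle\|x\|^2\rangle$, while the new summand contributes $\tfrac{\beta}{2}T_0^2\langle G\rangle$. The hypotheses $T_0\le 1/\alpha$ and $T_0\le\sqrt{2/\beta}$ force $\alpha^2 T_0^2\le 1$ and $\tfrac{\beta}{2}T_0^2\le 1$, so these two gradient-norm terms are each bounded by $\langle\|\nabla f\|^2\rangle$ and combine into the factor $2$ appearing in $\mathscr{D}'_0$; the potential term gives $(T_0^2+\omega T_0)\langle f\rangle$. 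Collecting everything yields $\mathcal{E}(T_0)\le\mathscr{K}_0+\mathscr{D}'_0$ and hence $\mathbb{E}[\|\nabla f(X_t)\|^2]\le 2(\mathscr{K}_0+\mathscr{D}'_0)/(\beta t^2)$.
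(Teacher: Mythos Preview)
Your proposal is correct and matches the paper's argument (its \lem{lyapunov-2}): the same Lyapunov operator $\hat J(t)$, the same reduction of the $\beta$-independent drift to \lem{lyapunov-1}, the same identification of the new $\beta$-contribution as $\beta t\,(G-x^{\top}\nabla G)\le 0$ via \eqn{convex-G}, and the same initial-energy bound. The paper's one organizational shortcut is to observe that $\hat J(t)=t^{-1}\hat H(t)+R(t)$ with the \emph{same} remainder $R=\sum_j(2x_j^2+t^{-1/2}\{A_j,x_j\})-3\alpha t f$ as in the $\beta=0$ case, so $i[\hat H,\hat J]=i[\hat H,R]$ and the only new commutator is $i[\tfrac{\beta}{2}t^3 G,\,t^{-1/2}\{A_k,x_k\}]$ (handled directly by \lem{commutation}(5)); the Hessian-term cancellation you plan to check by hand is then automatic, as those terms live inside $i[\hat H,t^{-1}\hat H]=0$.
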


\begin{remark}
    A sufficient condition for the identity~\eqn{convex-G} is that $G(x)$ is convex. In this case, the global minimizer of $G(x)$ must be $x^*$ and~\eqn{convex-G} holds. 
    However, this does not always require the objective function $f$ to be convex. 
    For example, consider $f(x) = \sqrt{x}$ for $x > 0$. While $f$ is a concave function, $G(x) = (f')^2 = \frac{1}{4x}$ is a convex function for $x > 0$.
\end{remark}

Similarly, the proof of \thm{convergence-2} exploits a Lyapunov function approach. We define
\begin{align*}
    \mathcal{F}(t) &= \langle \hat{J}(t) \rangle_t \coloneqq \langle \Psi(t) | \hat{J}(t) | \Psi(t)\rangle,\\
    \hat{J}(t) &= \frac{1}{2}\sum^d_{j=1}\left(t^{-2}\hat{p}_j + \alpha t \hat{v}_j + 2\hat{x}_j\right)^2 + \frac{\beta}{2}t^2 G + (t^2+\omega t)f,
\end{align*}
with $\omega = \gamma - 3\alpha \ge 0$.
Due to the positivity of $(t^{-2}p_j + \alpha t v_j + 2x_j)^2$ and $f$, we have 
$$\langle \|\nabla f\|^2\rangle_t \le \frac{2}{\beta t^2}\mathcal{F}(t) \le \frac{2}{\beta t^2}\mathcal{F}(0),$$
where the last step follows from~\lem{lyapunov-2}. This proves~\thm{convergence-2}.

\begin{lemma}\label{lem:lyapunov-2}
    Let $\gamma > 0$ and $\alpha \ge \max(\beta, 0)$. If \eqn{convex-G} holds, we have $\dot{\mathcal{F}}(t) \le 0$ for any $t > 0$.
\end{lemma}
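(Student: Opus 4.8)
The plan is to differentiate $\mathcal{F}(t)=\langle\Psi(t)|\hat{J}(t)|\Psi(t)\rangle$ along the Schr\"odinger flow \eqn{grad-qhd} via the Ehrenfest-type identity
$$\dot{\mathcal{F}}(t) = \langle \partial_t \hat{J}(t)\rangle_t + i\langle [\hat{H}(t), \hat{J}(t)]\rangle_t,$$
which follows from $\partial_t|\Psi\rangle = -i\hat{H}|\Psi\rangle$ and the Hermiticity of $\hat{H}(t)$, and then to show the right-hand side is non-positive. The organizing idea is to write $\hat{J}(t) = \hat{O}(t) + \tfrac{\beta}{2}t^2 G$ and $\hat{H}(t) = \hat{H}_0(t) + \tfrac{\beta}{2}t^3 G$, where $\hat{O}(t)$ and $\hat{H}_0(t)$ are exactly the Lyapunov operator and Hamiltonian from the $\beta=0$ case. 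Substituting and expanding bilinearly (using $[G,G]=0$, so there is no $\beta^2$ term) splits $\dot{\mathcal{F}}$ into the Case-1 expression $\langle\partial_t\hat{O}\rangle_t + i\langle[\hat{H}_0,\hat{O}]\rangle_t$ together with a purely $\beta$-dependent remainder. The proof of \lem{lyapunov-1} establishes that the operator $\partial_t\hat{O}+i[\hat{H}_0,\hat{O}]$ has non-positive expectation in every admissible state (for $\omega\ge 0$ and convex $f$), so the first piece is already $\le 0$ here; it remains to control the remainder.

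Next I would assemble the $\beta$-remainder, which has three contributions: the explicit time derivative $\partial_t(\tfrac{\beta}{2}t^2 G)=\beta t\,G$, and the two cross commutators $\tfrac{\beta}{2}t^2\,i[\hat{H}_0,G]$ and $\tfrac{\beta}{2}t^3\,i[G,\hat{O}]$. Since $G$ and $f$ are both multiplicative, $[f,G]=0$, so these commutators reduce to the momentum pieces of $\hat{H}_0$ and $\hat{O}$ acting against $G=\sum_k v_k^2$, evaluated with \lem{commutation} and its $G$-analogues of the same flavor. The prototype is item~5, $i[v_j^2,\{A_k,x_k\}] = -4t^{-3/2}(\partial^2 f/\partial x_j\partial x_k)\,x_k v_j$; the key observation is that summing over $j,k$ yields $\sum_{j,k}(\partial^2 f/\partial x_j\partial x_k)\,x_k v_j = \nabla f\trans \nabla^2 f\, x = \tfrac{1}{2}\nabla G\trans x$, since $\nabla G = 2\nabla^2 f\,\nabla f$. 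The commutators built from $A_j^2$ and the displaced square $(t^{-2}p_j+\alpha t v_j+2x_j)^2$ produce the remaining operator-valued terms, which I expect to cancel against pieces already accounted for in the Case-1 part via items~1--4.

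After collecting, the surviving $\beta$-contribution should reduce to a positive multiple of $t\,\langle G - \nabla G\trans x\rangle_t$, at which point the hypothesis \eqn{convex-G}, namely $G(x)-\nabla G(x)\trans x\le 0$ pointwise, forces this expectation to be $\le 0$ (integrating against the non-negative density $|\Psi|^2$); combined with the non-positivity inherited from \lem{lyapunov-1} this gives $\dot{\mathcal{F}}(t)\le 0$. The main obstacle is the commutator bookkeeping: because $\hat{O}$ contains the non-commuting combination $t^{-2}p_j+\alpha t v_j+2x_j$ squared, both $i[G,\hat{O}]$ and $i[\hat{H}_0,G]$ generate many cross terms mixing $p_j$, $v_j$, $x_j$ and Hessian entries, and one must verify that every term not proportional to $G-\nabla G\trans x$ either cancels or is itself manifestly non-positive under the stated parameter conditions $\gamma>0$ and $\alpha\ge\max(\beta,0)$ (which fix the signs of the surviving coefficients and ensure $\omega\ge 0$). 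Confirming these cancellations, and in particular matching the coefficient of the Hessian term so that $\beta t\,G$ and $-\beta t\,\nabla G\trans x$ combine with the correct sign, is where the real work lies.
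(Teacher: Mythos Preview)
Your approach is correct and essentially matches the paper's: the proof there also splits $\partial_t\hat{J}+i[\hat{H},\hat{J}]$ into the Case-1 expression $\mathcal{I}_1$ (nonpositive by convexity of $f$) and a remainder $\mathcal{I}_2=\beta t\,(G-x\trans\nabla G)$ handled via \eqn{convex-G}. The bookkeeping is cleaner than you fear and no cancellation against the Case-1 piece is needed: using the identity $\hat{O}=\tfrac{1}{t}\hat{H}_0+\sum_j\bigl(2x_j^2+t^{-1/2}\{A_j,x_j\}\bigr)-3\alpha t f$ from the proof of \lem{lyapunov-1}, the two cross commutators $\tfrac{\beta}{2}t^2\,i[\hat{H}_0,G]$ and $\tfrac{\beta}{2}t^3\,i[G,\hat{O}]$ annihilate each other's $i[\hat{H}_0,G]$-content exactly, leaving only $\tfrac{\beta}{2}t^{5/2}\sum_{k} i[G,\{A_k,x_k\}]$, which by item~5 of \lem{commutation} equals $-\beta t\,x\trans\nabla G$ on the nose.
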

The proof is left in~\append{lemma-2-proof}.

\section{Quantum algorithms and complexity analysis}\label{sec:quantum-complexity}
In this section, we study the time discretization of gradient-based QHD, which facilitates the simulation of the quantum dynamics in a (fault-tolerant) quantum computer. 

\subsection{Time discretization of the quantum Hamiltonian dynamics}
Recall that the gradient-based QHD dynamics are governed by the differential equation~\eqn{grad-qhd}. Let $U(t)$ be the time-evolution operator that maps an initial state $\ket{\Psi_0}$ to the solution state $\ket{\Psi(t)}$ at time $t \in [0, T]$, i.e., 
\begin{align*}
    U(t) \Psi(0) = \Psi(t) \quad \forall t \in [0, T].
\end{align*}
Formally, the time-evolution operator can be obtained by a sequence of infinitesimal time evolution of the quantum Hamiltonian $\hat{H}(t)$:
\begin{align*}
    U(t) = \lim_{K \to \infty} e^{-i h \hat{H}(t_K)} e^{-i h \hat{H}(t_{K-1})} \dots e^{-i h \hat{H}(t_1)},
\end{align*}
where $K$ is a positive integer, $h = t / K$ and $t_k = kh$ for $1 \le k \le N$. 
Note that the gradient-based QHD Hamiltonian can be decomposed in the form $\hat{H}(t_k) = H_{k,1}+H_{k,2}+H_{k,3}$, where
\begin{align*}
    H_{k,1} = -\frac{1}{2t^3_k} \Delta,\quad H_{k,2} = \frac{\alpha}{2}\{-i\nabla, \nabla f\},\\
    H_{k,3} = \frac{\left(\alpha^2+\beta\right)}{2}t^3 \|\nabla f\|^2 + (t^3+\gamma t^2) f.
\end{align*}
Therefore, we can further decompose a short-time evolution step using the product formula (i.e., operator splitting):
\begin{align}\label{eqn:operator-splitting}
    e^{-i h \hat{H}(t_k)} \approx e^{-i h H_{k,1}} e^{-i h H_{k,2}} e^{-i h H_{k,3}}.
\end{align}
Since all the Hamiltonians $H_{k,1}$, $H_{k,2}$, and $H_{k,3}$ can be efficiently simulated using a quantum computer, we obtain a quantum algorithm that implements gradient-based QHD to solve large-scale optimization problems, as summarized in~\algo{grad-qhd}.

\begin{algorithm}[tb]
   \caption{Gradient-based QHD with fixed step size}
   \label{algo:grad-qhd}
\begin{algorithmic}
   \STATE {\bfseries Classical Input:} Hamiltonian parameters $\alpha, \beta, \gamma$, step size $h$, number of iterations $K$.
   \STATE {\bfseries Quantum Input:} an initial guess state $\ket{\Psi_0}$
   \STATE {\bfseries Output:} a classical point $\xi \in \R^d$.
   \vspace{5pt}
   \hrule
   \vspace{5pt}
   \STATE Initialize the quantum register to $\ket{\Psi_0}$.
   \FOR{$k=1$ {\bfseries to} $K$}
   \STATE Determine $t_k = kh$.
   \STATE Implement a quantum circuit $U_k$ as described in~\eqn{operator-splitting}.
   \STATE Compute $\ket{\Psi_k} = U_k \ket{\Psi_{k-1}}$.
   \ENDFOR
   \STATE Measure the final quantum state $\ket{\Psi_K}$ with the position observable $\hat{x}$ to obtain a sample point $\xi \in \R^d$.
\end{algorithmic}
\end{algorithm}

\paragraph{On the choice of step size $h$.}
It is shown in~\citet{childs2021theory} that the product formula will introduce a ``simulation error'' such that
\begin{align*}
    &\left\|e^{-i h \hat{H}(t_k)} - e^{-i h H_{k,1}} e^{-i h H_{k,2}} e^{-i h H_{k,3}}\right\| \\
    &\le \frac{h^2}{2}\sum_{1 \le i\neq j \le 3}\left\|[H_{k,i},H_{k,j}]\right\|.
\end{align*}
A formal calculation shows that the commutator norm scales as $\mathcal{O}(t^3_k)$, which suggests $h \sim t^{-3/2}_k$ may be needed to control the simulation error in each time step.
However, in the numerical experiments, we observe that a much larger step size $h$ can still result in the convergence of the discrete-time gradient-based QHD. This observation aligns with our experience with the NAG method, where convergence is achieved with a step size proportional to $1/L$, irrespective of the continuous-time dynamics.
As a result, we treat the step size $h$ as an independent parameter in the complexity analysis. 
A complete understanding of the convergence of the discrete-time algorithm, however, is left for future study.

\begin{remark}
    Quantum simulations of time-dependent Hamiltonians constitute an active research area, with a growing body of literature addressing this topic (e.g., \cite{berry2020time,an2021time,an2022time,childs2022quantum,mizuta2024explicit}). These developments pave the way for more advanced implementations of gradient-based QHD, potentially offering improved asymptotic complexity. A detailed exploration of such implementations is left for future work.
\end{remark}

\subsection{Complexity analysis}
Now, we analyze the computational cost of~\algo{grad-qhd}. In our analysis, we assume the quantum computer has access to the function $f$ and its gradient via the following quantum circuits:
\begin{align*}
    O_f &\colon \ket{x}\ket{z}\mapsto \ket{x}\ket{f(x) + z}, \\
    O_{\nabla f} &\colon \ket{x}\ket{z}\mapsto \ket{x}\ket{\nabla f(x) + z}.
\end{align*}
The quantum circuits $O_f$ and $O_{\nabla f}$ are often called quantum \textit{zeroth-} and \textit{first-order} oracles. They can be efficiently constructed by quantum arithmetic circuits when the expressions of $f$ and $\nabla f$ are known.

\begin{remark}
    The requirement for a quantum first-order oracle $O_{\nabla f}$ can potentially be eliminated by leveraging Jordan's algorithm~\cite{jordan2005fast}, which estimates gradients using only a zeroth-order oracle $O_{f}$.
    However, without astrong smoothness assumptions on the objective function $f$, the query complexity of obtaining an $\epsilon$-approximate gradient typically scales as $\mathcal{O}(\sqrt{d}/\epsilon)$ ~\cite{gilyen2019optimizing}.
    In this work, we focus on the convergence properties of gradient-based QHD, leaving the incorporation of quantum gradient estimation techniques for future research.
\end{remark}

A crucial step in~\algo{grad-qhd} is to implement the quantum unitary operator $U_k$ based on the operator splitting formula~\eqn{operator-splitting}. We note that the sub-Hamiltonians $H_{k,1}$ and $H_{k,3}$ are fast-forwardable, and the operator $H_{k,2}$ can be simulated by invoking Quantum Singular Value Transformation (QSVT). Combining these technical results together, we end up with the overall complexity of the quantum algorithm, as summarized in~\thm{complexity}.

\begin{theorem}\label{thm:complexity}
    Let $f$ be $L$-Lipschitz and $\ket{\Psi_0}$ be a sufficiently smooth function. Then, we can implement~\algo{grad-qhd} for $K$ iterations using $\mathcal{O}(K)$ queries to the quantum zeroth-order oracle $O_f$ and
    $\widetilde{\mathcal{O}}\left(\alpha d h K L\right)$
    queries to the quantum first-order oracle $O_{\nabla f}$ and its inverses.\footnote{Here, the $\widetilde{\mathcal{O}}$ notation suppresses poly-logarithmic factors in the error parameter $\epsilon$. The parameter $\epsilon > 0$ represents the error budget in the Hamiltonian simulation, as detailed in~\lem{short-time-simulation}.}
\end{theorem}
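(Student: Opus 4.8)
The plan is to prove the bound by a per-step analysis. Since $U(t)$ is realized as a product of $K$ short-time propagators and each step is implemented through the operator-splitting formula \eqn{operator-splitting}, the total query count is $K$ times the cost of a single step. I would therefore isolate the three sub-Hamiltonians $H_{k,1}, H_{k,2}, H_{k,3}$ and bound the oracle cost of the factors $e^{-ihH_{k,1}}$, $e^{-ihH_{k,2}}$, and $e^{-ihH_{k,3}}$ separately, then sum.

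First, the two \emph{easy} factors. The kinetic term $H_{k,1} = -\Delta/(2t_k^3)$ is diagonal in the momentum basis, so $e^{-ihH_{k,1}}$ is implemented by a quantum Fourier transform, a diagonal phase rotation, and an inverse Fourier transform; this step is fast-forwardable and uses no queries to $O_f$ or $O_{\nabla f}$. The potential term $H_{k,3}$ is diagonal in the position basis, depending on $x$ only through $f(x)$ and $\|\nabla f(x)\|^2$. I would implement $e^{-ihH_{k,3}}$ in the standard compute--phase--uncompute fashion: invoke $O_f$ and $O_{\nabla f}$ once each to write $f(x)$ and the gradient components into ancillae, apply the position-diagonal phase via quantum arithmetic, and then uncompute. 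This costs $\mathcal{O}(1)$ queries to each oracle per step, and crucially the cost is independent of $h$.

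The crux is the cross term $H_{k,2} = \tfrac{\alpha}{2}\{-i\nabla, \nabla f\}$, which is a first-order differential operator diagonal in neither basis. I would simulate $e^{-ihH_{k,2}}$ using the quantum singular value transformation (QSVT). The first step is to build a block-encoding of $H_{k,2}$: write it as a linear combination over the $d$ coordinate directions of the Hermitian operators $\tfrac12\{\hat{p}_j,\hat{v}_j\}$, block-encode each $\hat{p}_j$ through the Fourier transform and each multiplicative factor $\hat{v}_j = \partial f/\partial x_j$ through one query to $O_{\nabla f}$, and combine them by linear combination of unitaries. The $L$-Lipschitz hypothesis gives $\|\hat{v}_j\| \le L$, while the \emph{sufficiently smooth} assumption on $\Psi_0$ lets me truncate the otherwise unbounded momentum operator to an effective scale contributing only poly-logarithmic factors; together with the LCU over $d$ terms this yields a block-encoding whose subnormalization is $\mathcal{O}(\alpha d L)$. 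Feeding this into the QSVT Hamiltonian-simulation routine then realizes $e^{-ihH_{k,2}}$ to error $\epsilon$ with $\widetilde{\mathcal{O}}(\alpha d h L)$ queries to the block-encoding, hence to $O_{\nabla f}$. This per-step cost is precisely what the paper records in \lem{short-time-simulation}.

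Summing the three contributions over the $K$ steps gives $\mathcal{O}(K)$ queries to $O_f$ (from $H_{k,3}$) and $\widetilde{\mathcal{O}}(\alpha d h K L)$ queries to $O_{\nabla f}$, where the QSVT simulation of $H_{k,2}$ dominates and absorbs the $\mathcal{O}(K)$ first-order queries coming from $H_{k,3}$; this is the claimed bound. I expect the main obstacle to be the block-encoding of $H_{k,2}$: handling the unbounded momentum operator rigorously---making precise the sense in which smoothness of $\Psi_0$ caps its effective norm---and correctly tracking the subnormalization through the LCU so that it scales as $\alpha d L$ rather than picking up an extra dimension or momentum factor. The product-formula error from \eqn{operator-splitting} is a separate accuracy question that does not affect the query counts, since $h$ is treated as an independent input parameter in the complexity analysis.
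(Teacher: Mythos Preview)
Your proposal is correct and follows essentially the same approach as the paper: a per-iteration decomposition into the three factors $e^{-ihH_{k,1}}$, $e^{-ihH_{k,2}}$, $e^{-ihH_{k,3}}$, with the kinetic and potential pieces handled as fast-forwardable diagonal operators (costing $\mathcal{O}(1)$ oracle queries each) and the cross term $H_{k,2}$ simulated via a block-encoding fed into QSVT, yielding the $\widetilde{\mathcal{O}}(\alpha d h L)$ per-step first-order query cost of \lem{short-time-simulation}. The paper makes the momentum truncation explicit by working on an $N$-point grid (so that $\|\tilde P_j\|\le \mathcal{O}(N)$ and the block-encoding normalization is $\mathcal{O}(\alpha d N L)$) and then invokes smoothness of $\Psi_0$ to take $N=\poly\log(1/\epsilon)$, which is exactly the mechanism you sketch more informally.
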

The details proof of~\thm{complexity}, including the efficient simulation of $H_{k,2}$ via QSVT, is presented in~\append{proof-complexity}.

\section{Numerical experiments}\label{sec:numerical}
In this section, we conduct extensive numerical experiments to evaluate the performance of gradient-based QHD and compare it with other prominent optimization algorithms.

\subsection{Experiment setup and implementation details}
Let $f\colon \R^d \to \R$ be an objective function with gradient $\nabla f(x)$.  
Given an optimization algorithm initialized with a random sample drawn from a fixed distribution $\rho_0$, the algorithm's output after $k$ iterations can be represented by a random variable $X_k \in \R^d$. We denote $\mathbb{E}[f(X_k)]$ as the expectation value of the objective function and $\mathbb{E}[\|\nabla f(X_k)\|^2]$ as expected gradient norm at iteration $k$. To assess the algorithm's performance, we define the success probability after $k$ iterations as
\begin{align*}
    \mathbf{P}_k \coloneqq \mathbb{P}[f(X_k) - f(x^*) \le \delta].
\end{align*}
where $\delta > 0$ is a pre-defined error threshold. For all the subsequent experiments, we set $\delta = 1$.

We remark that the iteration steps in gradient-based QHD (as shown in~\algo{grad-qhd}) are more intricate than those in classical methods such as SGDM and NAG. As demonstrated in the proof of~\lem{short-time-simulation}, the query and gate complexity per iteration of gradient-based QHD scales as $\widetilde{O}(d)$. In contrast, each iteration of SGDM/NAG involves only a single query to $\nabla f$, with a time complexity of $\mathcal{O}(d)$. Therefore, in terms of overall runtime, gradient-based QHD remains asymptotically comparable to NAG, which justifies our comparison based on the iteration count.

To evaluate the classical methods such as SGDM and NAG, we estimate the expectation values and success probabilities using a sample of $1000$ independent runs. Each run begins with a uniformly random initial guess and proceeds for $k$ iterations. For the quantum methods, since the probability density function can be explicitly derived from the quantum state vector, expectation values and success probabilities are computed via numerical integration. 

The numerical simulations of the quantum algorithms, including QHD and gradient-based QHD, are performed on a MacBook equipped with an M4 chip. Additional details on the numerical methods employed are provided~\append{numerical-grad-qhd}.

\subsection{Convex optimization}
To evaluate performance, we conduct a numerical comparison of gradient-based QHD against three baseline algorithms, including SGDM, NAG, and QHD, for convex optimization.  The test function used is
\begin{align}\label{eqn:convex-obj}
    f(x,y) = \frac{(x+y)^4}{256} + \frac{(x-y)^4}{128},
\end{align}
which is a convex yet non-strongly convex function, with a singular Hessian at its unique minimum $(0, 0)$. Notably, the gradient of this function does not satisfy the Lipschitz continuity condition.  This flat geometry presents significant challenges for classical methods that rely heavily on curvature information, making it a suitable benchmark for comparative evaluation. All methods are executed with a fixed step size $h=0.2$.\footnote{We have tested various step sizes ($h \in [0.05, 0.5]$) for gradient-based QHD and observed similar convergence behavior. To maintain consistency, we fix $h = 0.2$ in all comparisons.}
For the quantum variants, the initial evolution time is set to $t_0 = 1$. The parameters of gradient-based QHD are configured as $\alpha = -0.1$, $\beta = 0$, and $\gamma=5$.

The performance of these optimization algorithms is visualized in~\fig{convex}, where two key metrics are employed to access convergence: the average function values $\mathbb{E}[f(X_k)]$ (depicted in the left subplot) and the average gradient norm $\mathbb{E}[\|\nabla f(X_k)\|^2]$ (depicted in the right subplot).
\begin{figure}[htb!]
\centering
\vspace{-8pt}
\begin{subfigure}[t]{0.48\linewidth}
\centering
\includegraphics[scale=0.28]{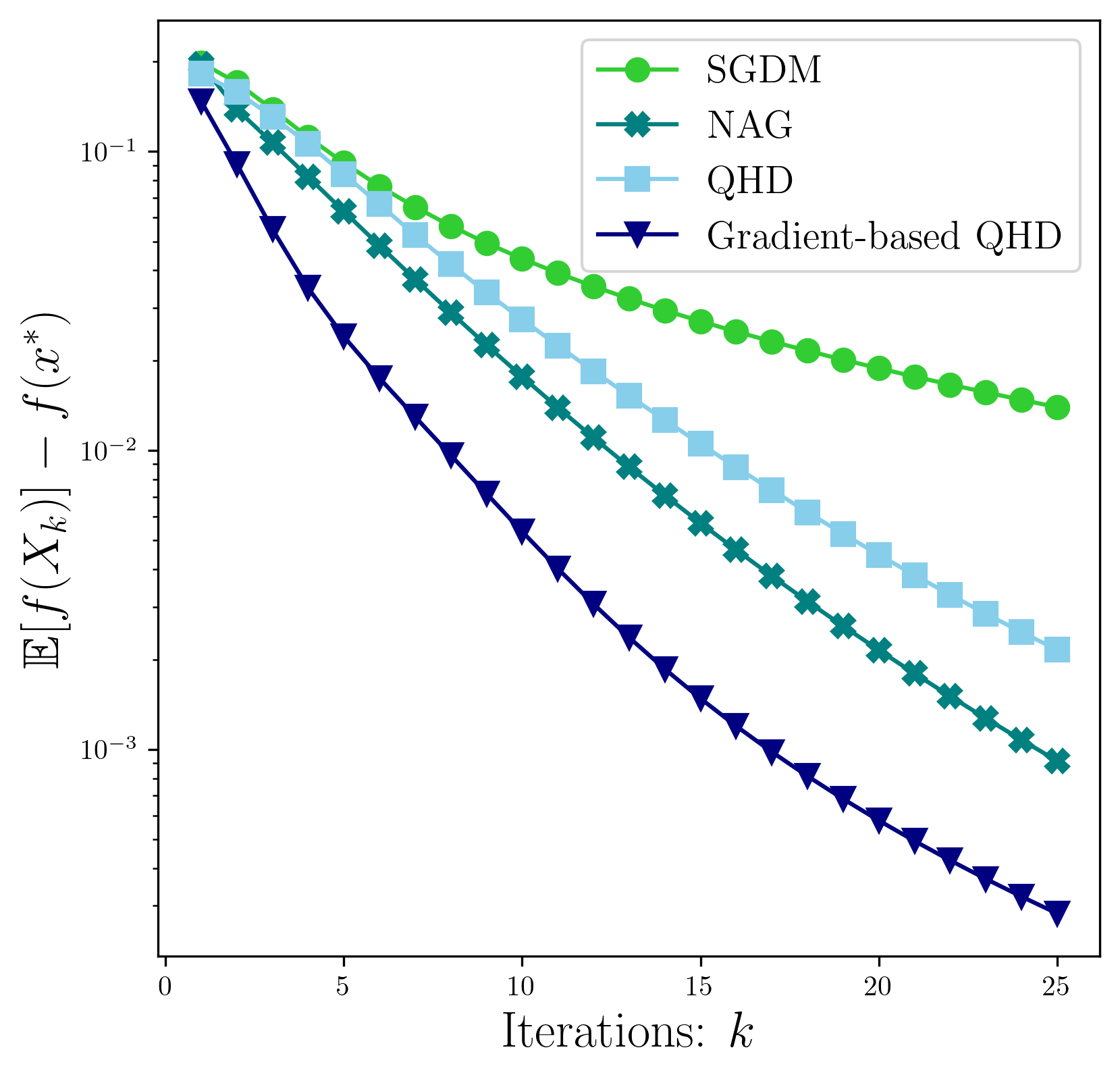}
\caption{Function value}
\label{fig: convex-func}
\end{subfigure}
\begin{subfigure}[t]{0.48\linewidth}
\centering
\includegraphics[scale=0.28]{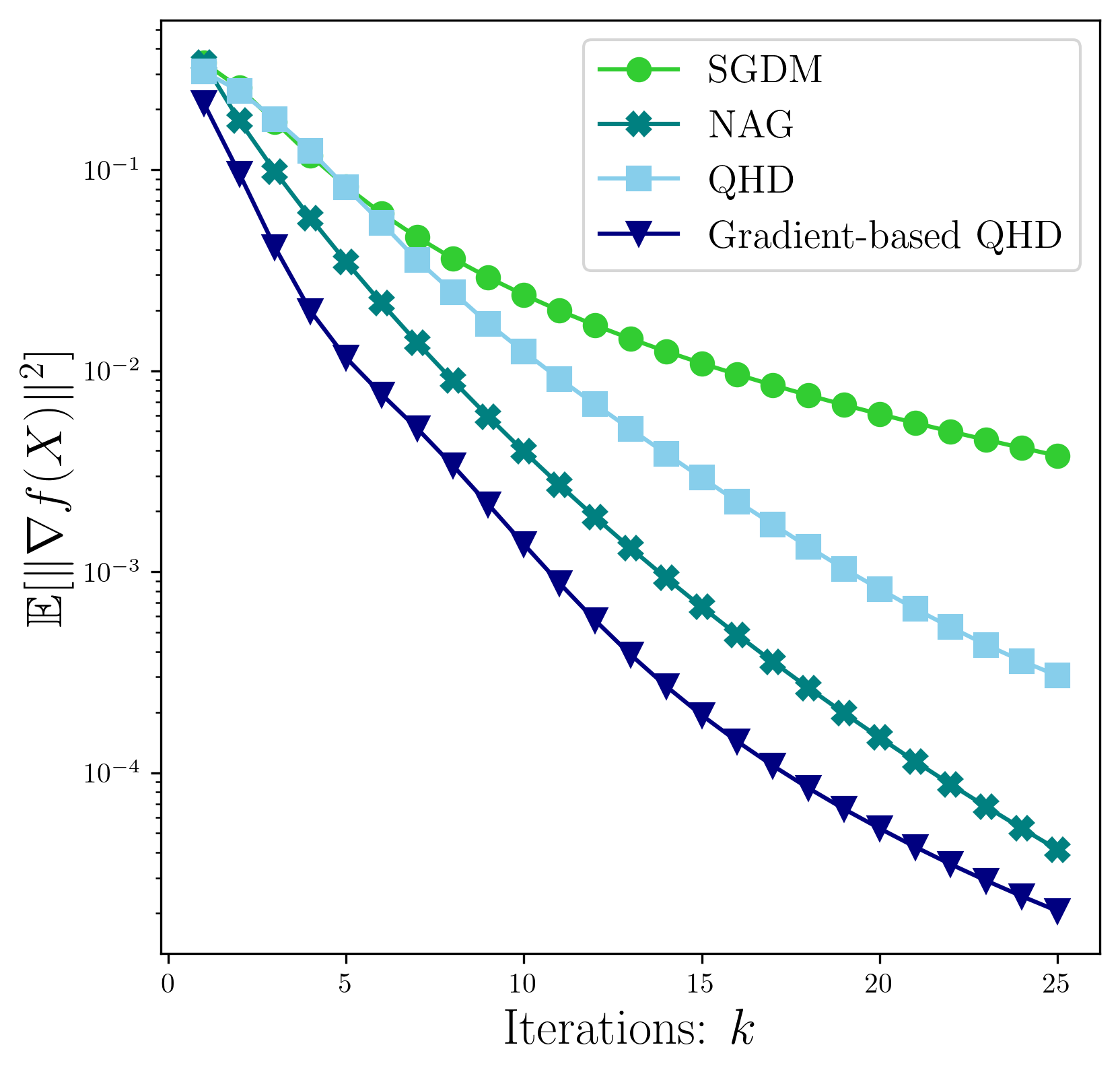}
\caption{Gradient norm}
\label{fig: convex-sp}
\end{subfigure}
\vspace{-8pt}
\caption{Numerical comparison of various optimization algorithms on the convex objective function~\eqn{convex-obj}, including function values and success probability.  }
\label{fig:convex}
\vspace{-8pt}
\end{figure}
Both quantities are tracked over iterations $1 \le k \le 25$. The results reveal distinct convergence behaviors. While the (classical) QHD exhibits a slower convergence rate compared to NAG, the gradient-based QHD stands out by achieving a remarkably faster convergence rate, outperforming all other algorithms. This superior performance highlights the effectiveness of incorporating gradient-based techniques into QHD, particularly for challenging optimization landscapes.

\subsection{Non-convex optimization}
More details on these test problems are available in~\append{non-convex-info}.

We now turn our attention to the numerical comparison of gradient-based QHD against three baseline algorithms, including SGDM, NAG, and QHD,  in non-convex optimization settings. Non-convexity introduces significant challenges for classical first-order methods, as local gradient information alone is often insufficient to distinguish the global minimum from other spurious local optima.

To illustrate these challenges, we evaluate a variety of non-convex optimization problem instances characterized by diverse landscape features:
\begin{itemize}
\item[(i)]   \textbf{Michalewicz function} (\fig{micha}): This function features a flat plateau and a unique global minimum hidden within a sharp valley, posing a difficult search problem.
\item[(ii)]  \textbf{Cube-Wave function} (\fig{cubewave}): With over ten local minima (four of which are global minima) concentrated within the cube $[-2, 2]^2$, this function exemplifies a rugged landscape. 
\item[(iii)] \textbf{Rastrigin function} (\fig{rastrigin}): This function presents a highly oscillatory landscape with a global minimum at the origin, making it notoriously challenging for optimization algorithms.
\vspace{-2mm}
\end{itemize}
Due to these intricate characteristics, all three problems are recognized as particularly difficult for classical first-order methods. Additional details about these test functions are provided in~\append{non-convex-info}.

\begin{figure}[ht]
\vspace{-6pt}
\begin{center}
\centerline{\includegraphics[width=0.48\textwidth]{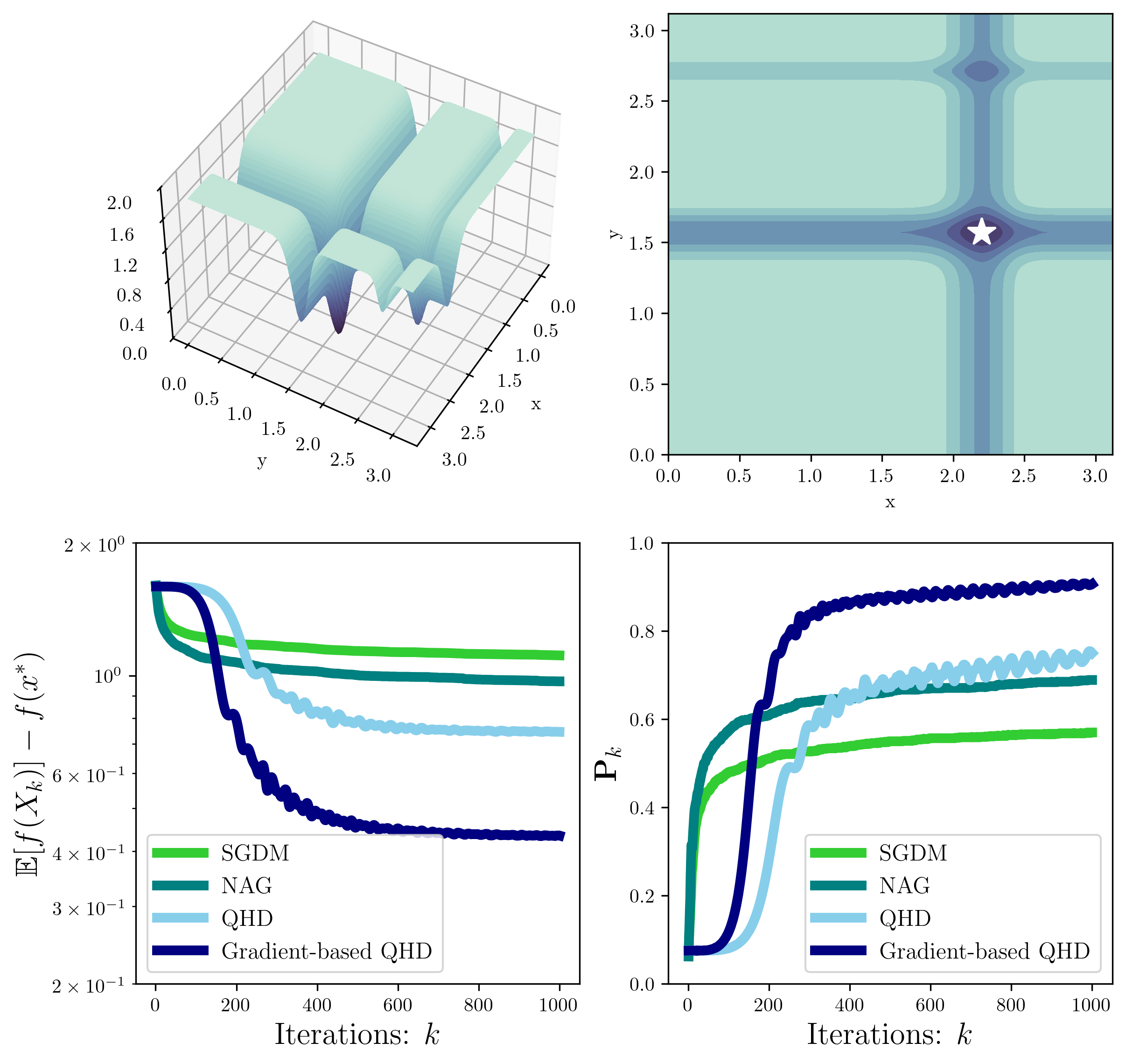}}
\vspace{-12pt}
\caption{Numerical comparison of various optimization algorithms on the Michalewicz function, including function values and success probability.}
\label{fig:micha}
\end{center}
\vspace{-24pt}
\end{figure}

\begin{figure}[ht!]
\vspace{-6pt}
\begin{center}
\centerline{\includegraphics[width=0.48\textwidth]{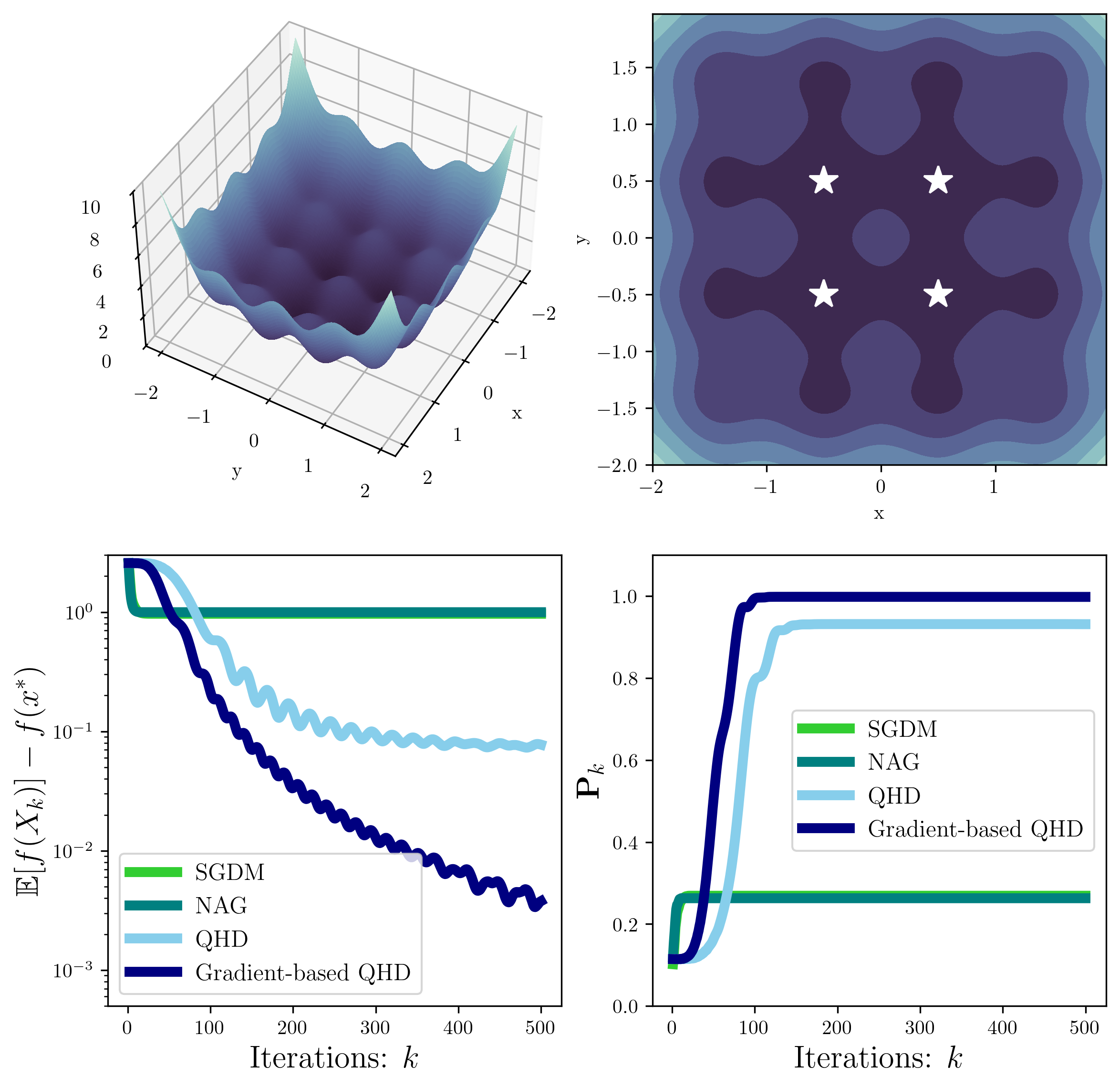}}
\vspace{-12pt}
\caption{Numerical comparison of various optimization algorithms on the Cube-Wave function, including function values and success probability.}
\label{fig:cubewave}
\end{center}
\vspace{-24pt}
\end{figure}

\begin{figure}[ht!]
\vspace{-6pt}
\begin{center}
\centerline{\includegraphics[width=0.48\textwidth]{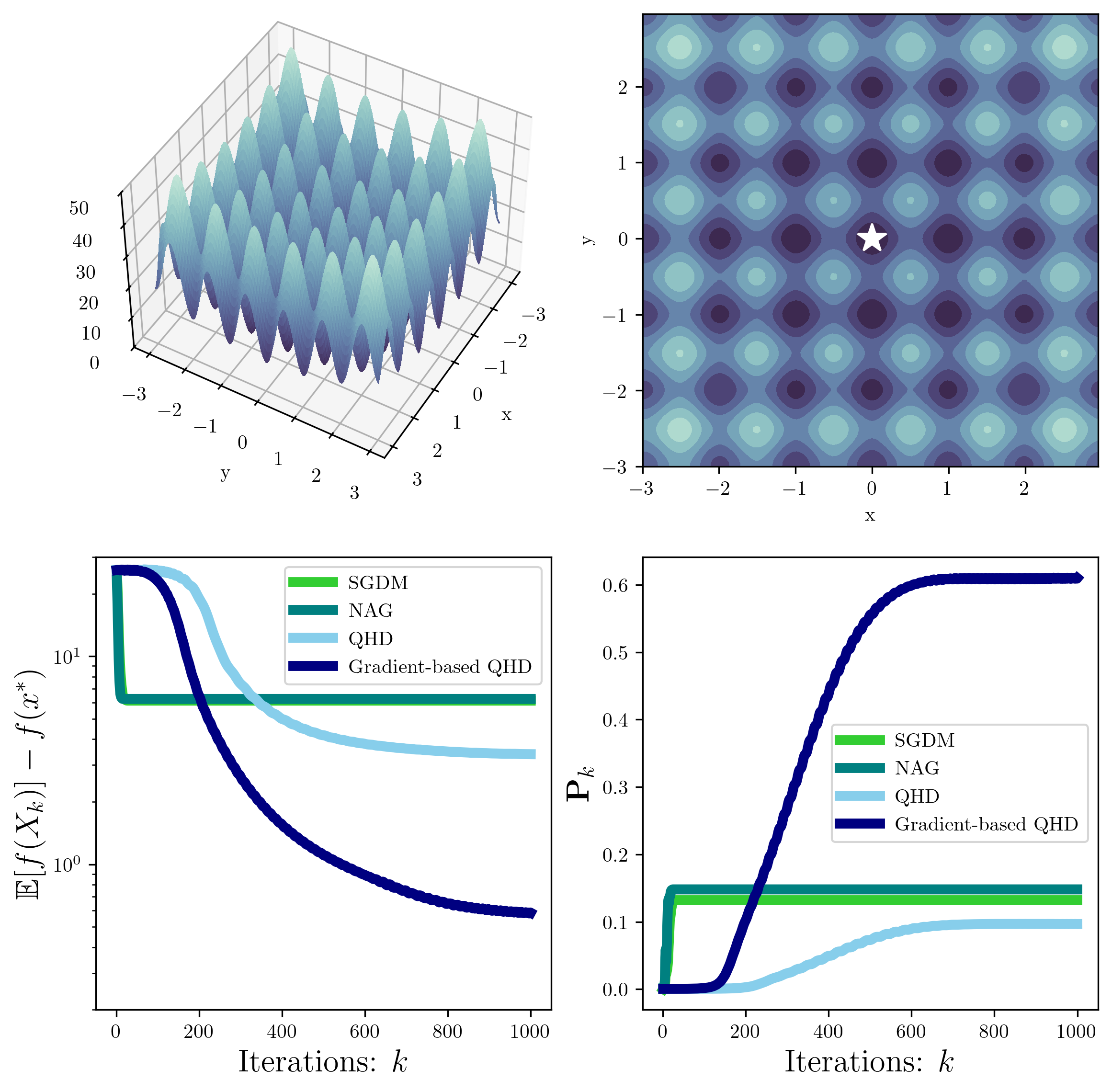}}
\vspace{-12pt}
\caption{Numerical comparison of various optimization algorithms on the Rastrigin function, including function values and success probability.}
\label{fig:rastrigin}
\end{center}
\vspace{-32pt}
\end{figure}

For the two quantum algorithms, the evolution time starts from $t_0=0$. In gradient-based QHD, the parameters are set to $\alpha = -0.05$, $\beta = 0$, and $\gamma=5$.

Despite the diversity of non-convex test problems, gradient-based QHD consistently delivers robust and favorable performance. Compared to both QHD and the classical algorithms, it achieves a significantly faster convergence rate and yields notably lower terminal objective function values. In the Cube-Wave function, for instance, the final objective value obtained by gradient-based QHD is nearly an order of magnitude lower than that of QHD and two orders of magnitude lower than those achieved by SGDM and NAG.

Further numerical analysis highlights that gradient-based QHD attains a higher success probability across all problem instances, indicating that its final states are tightly concentrated around the global minimizer. In summary, by leveraging gradient information within the quantum Hamiltonian framework, gradient-based QHD demonstrates enhanced \textit{global} convergence properties, outperforming QHD and classical optimization methods.

\section{Conclusion and Future Work}\label{sec:conclusion}
In this paper, we propose gradient-based QHD for continuous optimization problems without constraints. We prove the convergence of the gradient-based QHD dynamics in both function values and gradient norms via a Lyapunov function approach. We also discuss an efficient implementation of discrete-time gradient-based QHD using a fault-tolerant quantum computer. Our numerical results show that gradient-based QHD achieves improved convergence with a higher chance of identifying the global minimum in a sophisticated optimization landscape. 

Our theoretical analysis has primarily focused on the convergence of gradient-based QHD in continuous time, while the long-term behavior of the discrete-time algorithm deserves further investigation.
The numerical experiments are limited to 2D problems due to the exponential growth of computational overhead. Developing new numerical techniques could help evaluate the advantages of quantum Hamiltonian-based algorithms for high-dimensional optimization.

\section*{Software and Data}
The source code of the experiments is available at \url{https://github.com/jiaqileng/Gradient-Based-QHD}.

\section*{Acknowledgements}
J. L. is partially supported by the Simons Quantum Postdoctoral Fellowship, DOE QSA grant \#FP00010905, and a Simons Investigator Award in Mathematics through Grant No. 825053.
B. S. is partially supported by a startup fund from SIMIS and Grant No.12241105 from NSFC.

\section*{Impact Statement}

This paper presents work whose goal is to advance the field of 
Machine Learning. There are many potential societal consequences 
of our work, none of which we feel must be specifically highlighted here.

\bibliography{references}
\bibliographystyle{icml2025}

\newpage
\appendix
\onecolumn
\section{Review of accelerated gradient descent}\label{append:review}
\subsection{Accelerated gradient descent as differential equations}

Accelerated gradient descent methods are fundamental in both theory and practice. Nesterov~\cite{nesterov1983method} proposed the first accelerated gradient method that has the following update rules (where $s>0$ is the step size):
\begin{subequations}\label{eqn:nagd}
    \begin{align}
    x_k &= y_{k-1} - s\nabla f(y_{k-1}),\\
    y_k &= x_k + \frac{k-1}{k+2} (x_k - x_{k-1}),
\end{align}
\end{subequations}
It is known that Nesterov's gradient descent achieves the optimal convergence rate among all gradient-based methods.

On the other hand, there has been a long-lasting research attempting to relate gradient-based optimization algorithms with differential equations.
A seminal work by Su et al.~\cite{shi2022understanding} proposed a second-order differential equation to capture the acceleration phenomenon in Nesterov's algorithm. For sufficiently small step size $s$, the continuous-time limit of \eqn{nagd} is given by the following ordinary differential equation,
\begin{align}\label{eqn:nesterov-ode}
    \ddot{X} + \frac{3}{t}\dot{X} + \nabla f(X) = 0,
\end{align}
for $t>0$, with initial conditions $X(0) = x_0$ and $\dot{X}(0) = 0$. The convergence rate of the ODE is $O(t^{-2})$, which matches that of the discrete-time algorithm~\eqref{eqn:nagd}.

The ODE framework of accelerated gradient descent was later generalized via a variational formulation of the underlying dynamics. Wibisono, Wilson, and Jordan~\cite{wibisono2016variational} proposed to consider the Bregman Lagrangian,
\begin{align}
    \mathcal{L}(X,V,t) = e^{\alpha_t + \gamma_t} \left(\frac{1}{2}\Big|e^{-\alpha_t} V\Big|^2 - e^{\beta_t} f(X)\right), 
\end{align}
where $t \ge 0$ is the time, $X\in \R^d$ is the state vector, and $V \in \R^d$ is the velocity.\footnote{Here, we give a simplified version of the Bregman Lagrangian in which the Bregman divergence is given by the standard Euclidean distance; for details, see~\cite{wibisono2016variational}.}
Given the Lagrangian function $\mathcal{L}(X,V,t)$, we can consider the following variational problem:
\begin{align}
    \min_{X_t} J(X_t) = \int^\infty_0 \mathcal{L}(X, \dot{X}_t, t) dt,
\end{align}
where $J(X_t)$ is a functional defined on smooth curves $\{X_t \colon t \in [0,\infty)\}$. From the calculus of variations, a curve that minimizes the functional $J(X_t)$ necessarily satisfies the Euler-Lagrange equation:
\begin{align}
    \frac{d}{d t}\left(\frac{\partial \mathcal{L}}{\partial V}(X_t, \dot{X}_t, t)\right) = \frac{\partial \mathcal{L}}{\partial X}(X_t, \dot{X}_t, t).
\end{align}
Specifically, if we choose $\alpha_t = -\log(t)$, $\beta_t = \gamma_t = 2\log(t)$, the resulting Euler-Langrage equation is exactly the continuous-time limit of Nesterov's accelerated gradient descent~\eqn{nesterov-ode}. It is also shown that, if $\alpha_t$, $\beta_t$, and $\gamma_t$ satisfies the following ideal scaling conditions,
\begin{align}
    \dot{\beta}_t \le e^{\alpha_t}, \quad \dot{\gamma}_t = e^{\alpha_t},
\end{align}
the solutions to the Euler-Lagrange equation satisfy
\begin{align}
    f(X_t) - f(x^*) \le O(e^{-\beta_t}),
\end{align}
which gives a convergence rate of the dynamical system in continuous time.

\subsection{Understanding acceleration via high-resolution ODEs}\label{append:review-high-res-ode}

While the continuous-time formulations of accelerated gradient descent provide a more transparent perspective on the acceleration phenomenon and allow us to introduce the rich toolbox from ODE theory, they offer little understanding of different accelerated gradient descent algorithms with the same continuous-time limit. For example, Polyak's heavy-ball method and NAG have the same continuous-time limit, however, they exhibit strikingly different behaviors in practice: the heavy-ball method generally only achieves local acceleration, while NAG is an acceleration method applicable to all initial values of the iterate~\cite{lessard2016analysis}. 

The difference between the two algorithms lies in a gradient correction step that only exists in NAG. Inspired by the dimensional-analysis strategies in fluid mechanics, \citet{shi2022understanding} developed a high-resolution ODE framework to reflect the gradient correction effect in different algorithms with the same low-resolution continuous-time limit. 
The high-resolution ODEs for NAG are as follows.
\begin{equation}\label{eqn:high-res-nag-c}
    \begin{split}
        \ddot{X}(t) + \frac{3}{t}\dot{X}(t) + \sqrt{s}\nabla^2 f(X(t))\dot{X}(t) + \left(1+\frac{3\sqrt{s}}{2t}\right)\nabla f(X(t)) = 0,
    \end{split}
\end{equation}
for $t \ge 3\sqrt{s}/2$, with $X(3\sqrt{s}/2) = x_0$, $\dot{X}(3\sqrt{s}/2) = -\sqrt{s}\nabla f(x_0)$.

In contrast, the high-resolution ODE for the heavy-ball method does not have the higher-order correction term $\sqrt{s}\nabla^2 f(X(t))\dot{X}(t)$, which explains how the gradient correction step improves the overall convergence performance of NAG over the heavy-ball method. The high-resolution ODE framework also motivates the design of a new family of accelerated gradient descent algorithms that maintain the convergence rate of NAG.

To prove the convergence of the high-resolution ODEs, \citet{shi2022understanding} employs the following Lyapunov function (see \citet[Eq. (4.36)]{shi2022understanding}):
\begin{equation}
    \mathcal{E}(t) = t\left(t + \frac{\sqrt{s}}{2}\right)(f(X) - f(x^*)) + \frac{1}{2}\|t \dot{X} + 2 (X - x^*) + t\sqrt{s}\nabla f(X)\|^2.
\end{equation}
Let $X(t)$ be the solution to~\eqn{high-res-nag-c}, it is proven in~\citet[Lemma 5]{shi2022understanding} that for all $t \ge 3\sqrt{s}/2$
\begin{equation}
    \frac{\d \mathcal{E}(t)}{\d t} \le - \left[\sqrt{s}t^2 + \left(\frac{1}{L} + \frac{s}{2}\right)t + \frac{\sqrt{s}}{2L}\right]\|\nabla f(X)\|^2 < 0.
\end{equation}
As a direct consequence, for any $ t\ge 3\sqrt{s}/2$, we have
\begin{equation}
    f(X(t)) - f(x^*) \le \frac{(4+3sL)\|x_0 - x^*\|^2}{t(2t + \sqrt{s})}.
\end{equation}

\section{Two paradigms of quantum optimization}\label{append:two-paradigms}

Based on how the solution is encoded in a quantum state, there are two major paradigms in designing quantum algorithms for continuous optimization problems. In this section, we briefly discuss the two paradigms and compare their respective pros and cons. 

\paragraph{Solution vector as a quantum state.}
The first paradigm uses \emph{amplitude encoding}, where an $n$-dimensional vector $\vv$ is encoded into a $q$-qubit state with $q = \lceil\log_2(n)\rceil$:
$$\ket{\vv} = \frac{1}{\|\vv\|}\sum^n_{j=1} \vv_j \ket{j},$$
where $\{\ket{j}\}^{2^q-1}_{j}$ is the set of computational basis. 
This approach encompasses a vast majority of works in quantum optimization, including~\cite{rebentrost2019quantum,kerenidis2020quantum,liu2024towards}.
In this encoding scheme, the solution vector can be represented using $\mathcal{O}(\log(n))$ qubits and it allows us to exploit the rich quantum numerical linear algebra toolbox to accelerate existing classical algorithms. Nevertheless, the downside is that the recovery of the classical vector $\vv$ from its amplitude-encoded state $\ket{\vv}$, a task known as \emph{quantum state tomography}, would inevitably incur a $\Theta(n/\epsilon)$ overhead due to the Heisenberg limit, where $\epsilon$ is the readout precision~\cite{yuen2023improved}. Therefore, the quantum state tomography can nullify the potential exponential quantum speedup in the computation. 

\paragraph{Superposition of all possible solutions.}
Another paradigm uses \emph{basis encoding}, where an $n$-dimensional vector $\vv$ corresponds to a unique computational basis $\ket{b_v}$. To see how this works, we assume that each element in the real-valued vector $\vv$ is represented by a fixed-point number $v_j$ with bit length $q$. Therefore, we can uniquely enumerate all possible solutions (corresponding to all possible fixed-point numbers) in the $n$-dimensional space using $(2^{q})^n$ computational basis, or equivalently, $nq$ qubits. This encoding scheme is similar to how modern computers store an array with fixed/floating-point arrays. Nevertheless, the difference is that quantum computers can produce a \emph{superposition} of basis states, i.e., 
$$\ket{\Psi} = \sum_{\xx} \sqrt{\rho(\xx)} \ket{\xx},$$
where $\rho$ is a probability distribution over the whole space. In this case, measuring the quantum state $\ket{\Psi}$ is equivalent to sampling a point from the distribution $\rho$. Solving an optimization problem amounts to preparing an approximation of the Dirac-delta distribution at the minimizer $x^*$, i.e., the state $\ket{\xx^*}$.
Compared to the first paradigm, there are two major advantages of this approach: First, there is no obvious fundamental limitation on extracting information from the quantum register, as we can prepare a Dirac-delta-like state for which the probability of obtaining a fixed solution can be arbitrarily close to $1$. Second, the superposition of solutions $\ket{\Psi}$ is a natural quantum wave function, so we can design a solution path by exploiting the toolbox of continuous-space quantum mechanics, which is historically less explored in the quantum computation literature.
The drawback, however, is that we will not have exponentially improved space/qubit complexity to represent the solution. The first approach in principle only uses $\mathcal{O}(\log_2(n))$ qubits to represent a solution vector, while representing the superposition state $\ket{\Psi}$ requires $\mathcal{O}(n)$ qubits for an $n$-dimensional vector.

\section{Technical details of convergence analysis}

\subsection{Commutation relations in gradient-based QHD}\label{append:commutation-proof}
This section proves the commutation relations in \lem{commutation}.

\begin{lemma}\label{lem:p2_g_cr}
    Let $g\colon \R^d \to \R$ be a smooth function. We have
    $$i[\hat{p}^2_j, g] = \{\hat{p}_j, \partial_j g\}.$$
\end{lemma}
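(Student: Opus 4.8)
The plan is to reduce the computation to a single application of the Leibniz rule for commutators together with the elementary commutator between a momentum operator and a multiplication operator. Throughout, $g$ (and any of its derivatives) is understood as a multiplication operator acting on $\psi \in L^2(\R^d)$, while $\hat{p}_j = -i\partial_j$ is a differential operator.

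First I would record the base case: for any smooth $g$, compute $[\hat{p}_j, g]$ by letting it act on an arbitrary test function $\psi$. Since $\hat{p}_j = -i\partial_j$, the product rule gives $\hat{p}_j(g\psi) = -i(\partial_j g)\psi - i g(\partial_j \psi)$, while $g\,\hat{p}_j\psi = -i g(\partial_j \psi)$; subtracting yields $[\hat{p}_j, g]\psi = -i(\partial_j g)\psi$. Hence $[\hat{p}_j, g] = -i\,\partial_j g$ as a multiplication operator. Next I would apply the operator identity $[AB, C] = A[B,C] + [A,C]B$ with $A = B = \hat{p}_j$ and $C = g$ to obtain $[\hat{p}^2_j, g] = \hat{p}_j[\hat{p}_j, g] + [\hat{p}_j, g]\hat{p}_j$. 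Substituting the base case gives $[\hat{p}^2_j, g] = -i\big(\hat{p}_j(\partial_j g) + (\partial_j g)\hat{p}_j\big) = -i\{\hat{p}_j, \partial_j g\}$, and multiplying through by $i$ produces the claimed identity $i[\hat{p}^2_j, g] = \{\hat{p}_j, \partial_j g\}$.

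The only point requiring genuine care — and the closest thing to an obstacle — is bookkeeping about what is an operator and what is a function: in the final expression $\partial_j g$ denotes multiplication by the function $\partial_j g$, not a further differentiation, so the anticommutator $\{\hat{p}_j, \partial_j g\}$ pairs the differential operator $\hat{p}_j$ with this multiplication operator. To guard against such slips I would verify the identity directly on a test function $\psi$ as an independent check. Expanding the anticommutator gives $\{\hat{p}_j,\partial_j g\}\psi = -i(\partial_j^2 g)\psi - 2i(\partial_j g)(\partial_j \psi)$, and using $\hat{p}^2_j = -\partial^2_j$ one computes $i[\hat{p}^2_j, g]\psi = -i\big(\partial_j^2(g\psi) - g\,\partial_j^2\psi\big) = -i(\partial_j^2 g)\psi - 2i(\partial_j g)(\partial_j\psi)$. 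The two expressions agree, confirming the result; note that the calculation only uses $g \in C^2$ and standard density arguments to justify working on smooth $\psi$.
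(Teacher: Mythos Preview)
Your proof is correct. The paper's argument is purely computational: it applies both $i[\hat{p}^2_j,g]$ and $\{\hat{p}_j,\partial_j g\}$ to a test function, expands each side directly using $\hat{p}^2_j=-\partial_j^2$ and the product rule, and observes that both equal $-i(\partial_{jj}g\,\varphi + 2\partial_j g\,\partial_j\varphi)$. Your primary argument instead goes through the algebraic Leibniz identity $[AB,C]=A[B,C]+[A,C]B$ together with the base commutator $[\hat{p}_j,g]=-i\partial_j g$, which is slightly more structural and would generalize more readily to higher powers of $\hat{p}_j$. Your ``independent check'' on a test function is in fact exactly the paper's proof, so the two approaches differ only in emphasis; either one suffices on its own.
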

\begin{proof}
    Let $\varphi$ be a test function. Note that
    \begin{align*}
        (\hat{p}^2_j g)(\varphi) = -\frac{\partial^2}{\partial x^2_j}\left(g \varphi\right) = - \left(\partial_{jj}g \varphi + 2 \partial_j g \partial_j \varphi + g \partial_{jj}\varphi\right),\quad
        (g\hat{p}^2_j)(\varphi) = g\left(-\frac{\partial^2}{\partial x^2_j} \varphi\right) = - g\partial_{jj}\varphi.
    \end{align*}
    Therefore, 
    \begin{align*}
        i[\hat{p}^2_j, g] \varphi = -i\left(\partial_{jj}g \varphi + 2 \partial_j g \partial_j \varphi\right).
    \end{align*}
    Meanwhile, we find that 
    \begin{align*}
        \{\hat{p}_j, \partial_j g\} \varphi = (-i\partial_j) (\partial_j g \varphi) + \partial_j g\left(-i\partial_j \varphi\right) = -i\left(\partial_{jj}g \varphi + 2\partial_j g \partial_j \varphi\right),
    \end{align*}
    which concludes the proof.
\end{proof}

\begin{lemma}\label{lem:p_h_g_cr}
    Let $g\colon \R^d \to \R$, $h\colon \R^d \to \R$ be two smooth functions. We have
    \begin{align*}
        i[\{\hat{p}_j, h\}, g] = 2 h (\partial_j g).
    \end{align*}
\end{lemma}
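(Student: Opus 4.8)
The plan is to reduce the claim to the single elementary commutation rule $[\hat{p}_j, g] = -i(\partial_j g)$ and then expand the anticommutator term by term. First I would record that, since $g$ and $h$ are both multiplication operators, they commute with one another, so $[h,g]=0$ and the only nontrivial commutators are those of $\hat{p}_j$ against a multiplication operator. The base relation $[\hat{p}_j, g]\varphi = -i(\partial_j g)\varphi$ follows from the Leibniz rule exactly as in the computation carried out in the proof of \lem{p2_g_cr}: applying $\hat{p}_j = -i\partial_j$ to $g\varphi$ produces the extra term $-i(\partial_j g)\varphi$ beyond $g\hat{p}_j\varphi$.

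Next I would expand $\{\hat{p}_j, h\} = \hat{p}_j h + h\hat{p}_j$ and apply the commutator product rule $[AB,C] = A[B,C]+[A,C]B$ to each summand. For the first summand, $[\hat{p}_j h, g] = \hat{p}_j[h,g] + [\hat{p}_j,g]h = -i(\partial_j g)h$, using $[h,g]=0$. For the second, $[h\hat{p}_j,g] = h[\hat{p}_j,g]+[h,g]\hat{p}_j = -ih(\partial_j g)$. Since $\partial_j g$ and $h$ are both multiplication operators they commute, so the two contributions coincide and add to $-2ih(\partial_j g)$. Multiplying by $i$ then yields $i[\{\hat{p}_j,h\},g] = 2h(\partial_j g)$, as claimed.

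As an alternative I would note that the identity can equally be verified by a direct test-function computation in the spirit of \lem{p2_g_cr}: evaluating $\{\hat{p}_j,h\}g\varphi$ and $g\{\hat{p}_j,h\}\varphi$ separately and subtracting, all terms involving a first derivative of $\varphi$ cancel, leaving precisely $-2ih(\partial_j g)\varphi$. There is no genuine obstacle here; the only point requiring care is bookkeeping of operator ordering—specifically, remembering that once the commutator is taken the surviving object is an honest multiplication operator and that $\hat{p}_j$ has been fully eliminated, so that $h$ and $\partial_j g$ may be freely reordered.
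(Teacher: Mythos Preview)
Your proposal is correct. The paper proves the identity by a direct test-function expansion, while you use the commutator product rule $[AB,C]=A[B,C]+[A,C]B$ together with $[\hat{p}_j,g]=-i\partial_j g$; these are the same elementary computation packaged slightly differently, and you even sketch the paper's test-function version as your alternative.
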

\begin{proof}
    Let $\varphi$ be a test function. Direct calculation shows that
    \begin{align*}
        i[\{\hat{p}_j, h\}, g]\varphi &= i\left[\left(p_j h + h p_j\right)(g\varphi) - g\left(p_j h + hp_j\right)\varphi\right]\\
        &= i\left[p_j(hg\varphi) + h(p_j g)\varphi - g p_j(h\varphi) - gh(p_j \varphi)\right] \\
        &= 2i h (p_j g) \varphi = 2h(\partial_j g) \varphi,
    \end{align*}
    which implies $i[\{\hat{p}_j, h\}, g] = 2h(\partial_j g)$. This operator is again a multiplicative operator that commutes with both $g$ and $h$.
\end{proof}

Now, we are ready to prove~\lem{commutation}.
\begin{proof}
    \begin{enumerate}
        \item Recall that 
        $$A_j = t^{-3/2} p_j + \alpha t^{3/2} v_j,\quad v_j = \frac{\partial f}{\partial x_j},\quad A^2_j = t^{-3}p^2_j + \alpha \{p_j, v_j\} + \alpha^2 t^3 v^2_j.$$
        Therefore, 
        $$i[A^2_j, f] = i[t^{-3}p^2_j + \alpha \{p_j, v_j\}, f] = t^{-3}\{p_j , v_j\} + 2\alpha v^2_j.$$
        The last identity invokes~\lem{p2_g_cr} and \lem{p_h_g_cr}.

        \item Since the $v_j$ part in $A_j$ commutates with $x$ and $f$, we can drop it from the commutator:
        $$i[f, \{A_j, x_j\}] = i [f, \{t^{-3/2}p_j, x_j\}] = -it^{-3/2}[\{p_j, x_j\}, f] = -t^{-3/2}x_j v_j,$$
        where we use \lem{p_h_g_cr} in the last step.

        \item By dropping the $v^2_j$ part in $A^2_j$, we get
        $$i[A^2_j, x^2_k] = i[t^{-3}p^2_j + \alpha \{p_j, v_j\}, x^2_k] = t^{-3}i[p^2_j, x^2_k] + \alpha [\{p_j, v_j\}, x^2_k].$$
        By~\lem{p2_g_cr} and \lem{p_h_g_cr}, we obtain the following:
        $$i[A^2_j, x^2_k] = \begin{cases}
            0 & (j\neq k)\\
            2t^{-3}\{p_j, x_j\} + 4\alpha x_j v_j & (j = k).
        \end{cases}$$

        \item It can be readily verified that $[A_j, A_k] = 0$ and $[A_j, x_k] = 0$ for any $j \neq k$. Therefore, if $j\neq k$, we will have 
        $$i [A^2_j, \{A_k, x_k\}] = 0.$$
        When $j = k$, we first observe that
        $$i[A_j, x_j] = i[t^{-3/2}p_j, x_j] = t^{-3/2}i[p_j, x_j] = t^{-3/2}$$
        due to the canonical commutation relation $i[p_j, x_j] = 1$. By leveraging the commutation relation between $A_j$ and $x_j$,
        \begin{align*}
            i [A^2_j ,\{A_j, x_j\}] &= i \left(A^3_j x_j + A^2_j x_j A_j - A_j x_j A^2_j - x_j A^3_j \right)\\
            &= i\left(A^2_j (x_j A_j - i t^{-3/2}) + A^2_j x_j A_j - A_j x_j A^2_j - (A_j x_j + i t^{-3/2})A^2_j\right)\\
            &= i\left(2A_j(x_j A_j - i t^{-3/2})A_j - 2A_j x_j A^2_j - 2it^{-3/2}\right)\\
            &= 4t^{-3/2}A^2_j.
        \end{align*}

        \item This commutation relation is a direct consequence of~\lem{p_h_g_cr}. By dropping the $v_k$ part in $A_k$, we have
        \begin{align*}
            i[v^2_j, \{A_k, x_k\}] = -it^{-3/2} [\{p_j, x_k\}, v^2_j] = -4 t^{-3/2} x_k v_j (\partial_k v_{j}) = -4 t^{-3/2} \left(\frac{\partial^2 f}{\partial x_j x_k}\right) x_k v_j.
        \end{align*}
    \end{enumerate}
\end{proof}

\subsection{Proof of \lem{lyapunov-1}}\label{append:lemma-1-proof}
\begin{proof}
    By the definition of the Lyapunov function, we have
    \begin{align}
        \frac{\d}{\d t}\mathcal{E}(t) = \langle \partial_t \hat{O}(t) + i [\hat{H}(t), \hat{O}(t)] \rangle_t,
    \end{align}
    where $[A,B] \coloneqq AB - BA$ denotes the commutator of operators. 
    In the following calculation, we omit the hat over quantum operators to simplify the notation.

    First, we calculate the $\partial_t O(t)$ part. Direct calculations yield that
    \begin{equation}\label{eqn:partial_t_O}
        \begin{split}
            \partial_t O(t) = \sum^d_{j=1} \Big(-\frac{2}{t^5}p^2_j + \alpha^2 t v^2_j + 2\alpha x_j v_j - \frac{\alpha}{2t^2}\{p_j, v_j\} - \frac{2}{t^3}\{p_j, x_j\} \Big) + (2t + \omega) f.
        \end{split}
    \end{equation}
    As for the commutator part, it is worth noting that 
    \begin{equation}
    \begin{split}
        O(t) &= \frac{1}{2}\sum^d_{j=1}(t^{-1/2}A_j + 2\hat{x}_j)^2 + \left(t^2+ \omega t\right)f\\
        &= \frac{1}{t}H(t) + \sum^d_{j=1}\left(2x^2_j + \frac{1}{t^{1/2}}\{A_j, x_j\}\right) - 3\alpha t f.
    \end{split}  
    \end{equation}
    Therefore, we have
    \begin{equation}\label{eqn:commutator}
    \begin{split}
        i[H(t), O(t)] = i \left[H(t), \sum^d_{j=1}\left(2x^2_j + \frac{\{A_j, x_j\}}{t^2}\right) - 3\alpha t f\right]
    \end{split}
    \end{equation}
    Invoking the commutation relations 1-4 in \lem{commutation} to simplify \eqn{commutator} and combining it with \eqn{partial_t_O}, we obtain the following identity:
    \begin{align}
        \partial_t O + i[H, O] = (2t+\omega)(f(x) - x\trans \nabla f(x)) - 2\omega f(x).
    \end{align}
    Since $f$ is convex, we have $f(x) - x\trans \nabla f(x) \le 0$ for any $x \in \R^d$.
    Since $\omega = \gamma - 3\alpha \ge 0$, it follows that $\partial_t O + i[H, O] \le 0$ and $f(x) \ge 0$ for all $x\in \R^d$, which implies that 
    $$\frac{\d}{\d t}\mathcal{E}(t) = \langle \partial_t \hat{O}(t) + i [\hat{H}(t), \hat{O}(t)] \rangle_t \le 0.$$
\end{proof}

\subsection{Proof of \lem{lyapunov-2}}\label{append:lemma-2-proof}
\begin{proof}
    Similar to the proof of \lem{lyapunov-1}, we have 
    \begin{align}
        \frac{\d}{\d t}\mathcal{F}(t) = \langle \partial_t \hat{J}(t) + i [\hat{H}(t), \hat{J}(t)] \rangle_t.
    \end{align}
    Direct calculation shows that
    \begin{align}
        \partial_t \hat{J}(t) + i [\hat{H}(t), \hat{J}(t)] = \mathcal{I}_1 + \mathcal{I}_2(t),
    \end{align}
    where 
    \begin{equation}\label{eqn:I1}
        \mathcal{I}_1(t) = -2\omega x\trans \nabla f(x) + 2t(f(x) - x\trans \nabla f(x)) \le 0,
    \end{equation}
    \begin{equation}\label{eqn:I2}
    \begin{split}
        \mathcal{I}_2(t) &= \beta t \left(\sum^d_{j=1} v^2_j\right) + \frac{\beta}{2}t^{5/2}\sum^d_{ j,k=1} [v^2_j, \{A_k, x_k\}]\\
        &= \beta t \left[\sum^d_{j=1} v^2_j - 2\sum^d_{j,k=1}\left(\frac{\partial^2 f}{\partial x_j x_k}\right)x_k v_j\right]\\
        &= \beta t \left(G(x) - x\trans \nabla G(x)\right) \le 0.
    \end{split}
    \end{equation}
    The second equation uses commutation relation 5 in \lem{commutation}, and the last inequality is deduced from the convexity of $G(x)$. Combining \eqn{I1} and \eqn{I2}, we prove the lemma.
\end{proof}

\section{Technical details of complexity analysis}\label{append:proof-complexity}

\begin{lemma}\label{lem:short-time-simulation}
    Assume that $f\colon \R^d \to \R$ is a $L$-Lipschitz function. 
    For sufficiently smooth wave function $\ket{\Phi}$, we can prepare a quantum state $\ket{\Psi}$ such that
    $$\|\Psi - e^{-i h H_{k,2}} \Phi\| \le \epsilon$$
    using 
    $$\widetilde{\mathcal{O}}\left(\alpha d h L\right)$$
    queries to the first-order oracle $O_{\nabla f}$. 
    Here, the $\widetilde{\mathcal{O}}(\cdot)$ notation suppressed poly-logarithmic terms in $1/\epsilon$.
\end{lemma}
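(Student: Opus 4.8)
The plan is to construct an explicit block encoding of the sub-Hamiltonian $H_{k,2}$ from the first-order oracle $O_{\nabla f}$, and then feed it into the standard QSVT-based Hamiltonian-simulation subroutine. First I would record the concrete form of the operator,
$$H_{k,2} = \frac{\alpha}{2}\sum^d_{j=1}\{\hat{p}_j, \hat{v}_j\}, \quad \hat{v}_j = \partial_j f,$$
which is manifestly Hermitian and splits as a sum of $d$ anti-commutator terms, each of the form $\hat{p}_j\hat{v}_j + \hat{v}_j\hat{p}_j$. The strategy reduces to block-encoding the individual factors and composing them, since both $\hat{p}_j$ and $\hat{v}_j$ are separately block-encodable.

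For the multiplication operator $\hat{v}_j$, I would load $\partial_j f(x)$ into an ancilla register via one call to $O_{\nabla f}$ and apply a controlled rotation; because $f$ is $L$-Lipschitz we have $|\partial_j f| \le \|\nabla f\| \le L$, so the resulting block encoding carries subnormalization $O(L)$. The momentum operator $\hat{p}_j$ is diagonalized by the quantum Fourier transform and block-encoded with subnormalization equal to a momentum cutoff $p_{\max}$. Here the assumption that $\ket{\Phi}$ is \emph{sufficiently smooth} is essential: its high-momentum tail is negligible, so truncating to $|p| \le p_{\max}$ with $p_{\max}$ poly-logarithmic in $1/\epsilon$ perturbs $e^{-ihH_{k,2}}\ket{\Phi}$ by at most $\epsilon$, allowing $p_{\max}$ to enter only through the $\widetilde{\mathcal{O}}$.

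Next I would compose these primitives: a product of the two block encodings realizes $\hat{p}_j\hat{v}_j$ (and, with reversed ordering, its partner), a linear-combination-of-unitaries step symmetrizes them into $\{\hat{p}_j,\hat{v}_j\}$, and a second LCU sums over $j = 1,\dots,d$. Propagating subnormalizations through product and LCU compositions yields a block encoding of $H_{k,2}$ with overall subnormalization $\alpha_{\mathrm{BE}} = O(\alpha\, d\, L\, p_{\max})$, at the cost of $O(1)$ calls to $O_{\nabla f}$ (and its inverse, for uncomputing the ancillas) per block-encoding query. Invoking the optimal QSVT Hamiltonian-simulation routine then implements $e^{-ihH_{k,2}}$ to error $\epsilon$ using $\widetilde{\mathcal{O}}(\alpha_{\mathrm{BE}}\, h + \log(1/\epsilon))$ block-encoding queries; multiplying by the per-query oracle cost and absorbing $p_{\max}$ into the tilde gives the claimed bound $\widetilde{\mathcal{O}}(\alpha\, d\, h\, L)$.

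The main obstacle I anticipate is the normalization bookkeeping rather than any single clever trick. Two points need care: first, justifying that the genuinely unbounded operator $\hat{p}_j$ contributes only a poly-logarithmic $p_{\max}$ (through the smoothness truncation) instead of a polynomial factor, so that it can legitimately be swallowed by $\widetilde{\mathcal{O}}$; and second, verifying that composing the \emph{non-commuting} block encodings of $\hat{p}_j$ and $\hat{v}_j$ reproduces the correctly symmetrized anti-commutator without inflating the subnormalization beyond $O(\alpha\, d\, L\, p_{\max})$, in particular confirming that the $d$-fold LCU over the inner-product structure $\sum_j \hat{p}_j\hat{v}_j$ contributes at most a linear factor in $d$.
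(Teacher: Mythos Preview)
Your proposal is correct and follows essentially the same approach as the paper: block-encode $H_{k,2}$ from $O_{\nabla f}$ with subnormalization $O(\alpha d L \cdot p_{\max})$ and then apply the QSVT Hamiltonian-simulation routine, using the smoothness of $\ket{\Phi}$ to keep $p_{\max}$ (equivalently, the paper's spatial discretization parameter $N$) poly-logarithmic in $1/\epsilon$. The paper packages the composition step by citing the block-encoding arithmetic lemmas of Gily\'en et al.\ rather than spelling out the product/LCU construction you describe, but the two arguments are otherwise the same.
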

\begin{proof}
    Recall that
    $$H_{k,2} = \frac{\alpha}{2}\{-i\nabla, \nabla f\} = \frac{\alpha}{2}\sum^d_{j=1} \{p_j, v_j\}.$$
    Note that this operator is independent of time and thus of $k$.
    To simulate the Hamiltonian $H_{k,2}$, we need to perform spatial discretization for the operators $p_j$ and $v_j$. The standard approach is to consider a $d$-dimensional regular mesh with $N$ grid points on each dimension, e.g.,~\cite{childs2022quantum,an2022time}. The momentum operators can be implemented by applying Quantum Fourier Transform 2 times (with overall gate complexity $d\poly\log(N)$), as discussed in~\cite{li2023efficient}. The discretized Hamiltonian operator takes the following form:
    \begin{align}
        \widetilde{H}_{k,2} = \frac{\alpha}{2}\sum^d_{j=1} (\Tilde{P}_j \Tilde{V}_j + \Tilde{V}_j \Tilde{P}_j),
    \end{align}
    where $\|\Tilde{P}_j\| \le \mathcal{O}(N)$, and $\|\Tilde{V}_j\| \le L$, with $L$ the Lipschitz constant of $f$. By using $\mathcal{O}(1)$ queries to the first-order oracle $O_{\nabla f}$, we can implement a block-encoding of the matrix $\widetilde{H}_{k,2}$ with a normalization factor $a \le \mathcal{O}(\alpha d N L)$, and an additional $\mathcal{O}(d)$ ancilla qubits~\citet[Lemma 29,30]{gilyen2019quantum}. With the block-encoded operator $H_{k,2}$, we can perform optimal Hamiltonian simulation by QSVT~\citet[Corollary 32]{gilyen2019quantum}. The total number of queries to the block-encoding is 
    $$\mathcal{O}\left(a h + \log(1/\epsilon)\right),$$
    with an additional $\mathcal{O}(d(a h + \log(1/\epsilon)))$ elementary gates. Given that the input wave function $\Phi$ is sufficiently smooth, the discretization number $N$ can be chosen as $N = \poly\log(1/\epsilon)$ since the spatial discretization can be regarded as a pseudo-spectral method. It turns out that the overall query complexity of the Hamiltonian simulation reads $\widetilde{\mathcal{O}}\left(d\alpha h L\right)$, where the $\widetilde{\mathcal{O}}(\cdot)$ notation suppresses poly-logarithmic terms in $1/\epsilon$.
\end{proof}

Now, we are ready to prove~\thm{complexity}.
\begin{proof}
    In~\algo{grad-qhd}, each iteration requires the implementation of the quantum circuit
    $$U_k = e^{-i h H_{k,1}} e^{-i h H_{k,2}} e^{-i h H_{k,3}}.$$
    Note that $H_{k,1} = - \Delta/(2t^3_k)$ and the Laplacian operator $\Delta$ can be diagonalized by Fourier transform, so we can implement $e^{-ihH_{k,1}}$ using $\mathcal{O}(d\log^2(N))$ elementary gates. 
    The Hamiltonian $H_{k,3}$ is a multiplicative operator with two commuting terms, i.e., 
    $$e^{-i h H_{k,3}} = e^{-i h (\alpha^2+\beta)t^3_k \|\nabla f\|^2/2} e^{-i h (t^3_k + \gamma t^2)f}.$$
    Since the functions $f$ and $\|\nabla f\|^2$ are multiplicative operators and reduce to diagonal matrices after spatial discretization. Therefore, the Hamiltonian $H_{k,3}$ is fast-forwardable and can be implemented using $\mathcal{O}(1)$ uses of the zeroth- and first-order oracle of $f$, respectively.
    Finally, by~\lem{short-time-simulation}, the Hamiltonian $H_{k,2}$ can be simulated using $\widetilde{\mathcal{O}}(\alpha d h L)$ queries to the first-order oracle $O_{\nabla f}$. 
    By iterating these steps for $K$ times, we can implement the quantum algorithm using $\mathcal{O}(K)$ queries to the zeroth-order oracle $O_f$ and $\widetilde{\mathcal{O}}(\alpha d h K L)$ queries to the first-order oracle $O_{\nabla f}$.
\end{proof}

\section{Details of numerical experiments}
\subsection{Numerical implementations of optimization algorithms}\label{append:numerical-grad-qhd}

In the numerical experiments, we test four optimization algorithms: Stochastic Gradient Descent with momentum (SGDM), Nesterov's accelerated gradient descent (NAG), Quantum Hamiltonian Descent (QHD), and Gradient-based QHD. Our Python implementation of the numerical algorithms can be found in the supplementary materials. 

\paragraph{SGDM.}
The iterative update rules for SGDM are as follows:
\begin{align*}
    v_k &= \eta_k v_{k-1} - (1 - \eta_k) s_k g_k,\\ 
    x_k &= x_k + v_k,
\end{align*}
where $1 \le k \le K$ is the iteration number, $\eta_k$ is the momentum coefficient, $s_k$ is the step size, $g_k$ is an unbiased gradient estimator at $x_k$. We use 
\begin{align*}
    \eta_k = 0.5 + \frac{0.4 k}{K},\quad s_k = \frac{s_0}{k}.
\end{align*}
with $s_0 = 0.01$, $v_0 = 0$, and a uniformly random initial guess $x_0$. The gradient estimator $g_k$ is obtained by adding a unit Gaussian random noise to the exact gradient $\nabla f(x_k)$.

\paragraph{NAG.}
The update rules of NAG are as follows:
\begin{align*}
    x_k &= y_{k-1} - s\nabla f(y_{k-1}),\\ 
    y_k &= x_k + \frac{k-1}{k+2} (x_k - x_{k-1}),
\end{align*}
for $1 \le k \le K$. We choose $y_0 = 0$ and a uniformly random initial guess $x_0$. The step size is chosen as $s = 0.01$.

\paragraph{QHD and gradient-based QHD.}
Both QHD and gradient-based QHD are simulated following~\algo{grad-qhd}. Note that QHD is a special case of gradient-based QHD with $\alpha = \beta = \gamma = 0$.
The simulation is performed in a mesh grid with $N = 128$ grid points per dimension, with the momentum and kinetic operators implemented using FFT, as discussed in~\append{proof-complexity}. The step size varies with the test problems: We use $h = 0.01$ for the Styblinski-Tang and Michalewicz function, $h = 0.02$ for the Cube-Wave function, and $h = 0.005$ for the Rastrigin function. 

\subsection{Non-convex test problems}\label{append:non-convex-info}
The test problems used in this paper are defined as follows:
\begin{enumerate}
    \item Styblinski-Tang function:
        $$f(x,y) = 0.2 \times \left(x^4 - 16x^2 + 5 x + y^4 - 16 y + 5y\right),$$
        where we introduce a normalization factor of $0.2$ for a better illustration. 
        This function has a unique global minimizer at $(x^*,y^*) = (-2.9, -2.9)$, with the minimal function value $f(x^*,y^*) \approx -31.33$.
        The numerical algorithms are implemented over the square region $\{-5 \le x, y \le 5\}$.
        
    \item Michalewicz function: 
    $$f(x,y) = - \sin(x)\sin(x^2/\pi)^{20} - \sin(y)\sin(2y^2/\pi)^{20}.$$ 
    This function has a unique global minimizer at $(x^*, y^*) = (2.2, 1.57)$, with the minimal function value $f(x^*, y^*) \approx - 1.8$.
    The numerical algorithms are implemented over a square region $\{0 \le x, y \le \pi\}$.
    
    \item Cube-Wave function: 
    $$f(x,y) = \cos(\pi x)^2 + 0.25 x^4 + \cos(\pi y)^2 + 0.25 y^4.$$
    This function has 4 global minima, namely, $(x^*, y^*) = (\pm 0.5, \pm 0.5)$. The minimal function value is $f(x^*) \approx 0.03$. 
    The numerical algorithms are implemented over a square region $\{-2 \le x, y \le 2\}$.
    
    \item Rastrigin function:
    $$f(x,y) = x^2 - 10 \cos(2\pi x) + y^2 - 10\cos(2\pi y) + 20.$$ 
    This function has a unique global minimizer at $(x^*, y^*) = (0,0)$, with the minimal function value $f(x^*, y^*) = 0$. 
    The numerical algorithms are implemented over a square region $\{-3 \le x, y \le 3\}$.
\end{enumerate}

\end{document}